\documentclass[prr,twocolumn,superscriptaddress,
 amsmath,amssymb,
 aps,
]{revtex4-2}

\usepackage{graphicx}
\usepackage{dcolumn}
\usepackage{bm}
\newcommand{\hnote}[1]{\textcolor{blue}{[#1]}} 
 
\usepackage{hyperref}
\usepackage{amsthm}
\usepackage{dsfont}
\usepackage{mathtools}
\usepackage{comment}
\usepackage{hyperref}
\usepackage{lipsum}
\hypersetup{
    colorlinks=true,
    citecolor = [rgb]{0.3,0.8,0},
    linkcolor=blue,
    filecolor=magenta,      
    urlcolor=[rgb]{0.4,0.7,0},
}

\usepackage[capitalise, nameinlink]{cleveref}
\newtheorem*{main result}{Main Result}

\newtheorem{lemma}{Lemma}
\newtheorem{result}{Result}

\DeclarePairedDelimiter{\norm}{\lVert}{\rVert}
\DeclarePairedDelimiter{\snorm}{\lVert}{\rVert_1}
\DeclareMathOperator{\dis}{D}
\usepackage[colorinlistoftodos]{todonotes}

\begin{document}

\title{Robust Semi-Device Independent Certification of All Pure Bipartite Maximally Entangled States via Quantum Steering}
\author{Harshank Shrotriya}
\author{Kishor Bharti}
\affiliation{Centre for Quantum Technologies, National University of Singapore}
\author{Leong-Chuan Kwek}
\affiliation{Centre for Quantum Technologies, National University of Singapore} \affiliation{MajuLab, CNRS-UNS-NUS-NTU International Joint Research Unit, Singapore UMI 3654, Singapore}
\affiliation{National Institute of Education, Nanyang Technological University, Singapore 637616, Singapore}


\begin{abstract}
 The idea of self-testing is to render guarantees concerning the inner workings of a device based on the measurement statistics. It is one of the most formidable quantum certification and benchmarking schemes. Recently it was shown in [Coladangelo et al. Nat. Commun. 8, 15485 (2017)] that all pure bipartite entangled states can be self tested in the device independent scenario by employing subspace methods introduced by [Yang et al. Phys. Rev. A 87, 050102(R) (2013)]. Here, we have adapted their method to show that any bipartite pure entangled state can be certified in the semi-device independent scenario through Quantum Steering. Analogous to the tilted CHSH inequality, we use a steering inequality called Tilted Steering Inequality for certifying any pure two-qubit entangled state. Further, we use this inequality to certify any bipartite pure entangled state by certifying two-dimensional sub-spaces of the qudit state by observing the structure of the set of assemblages obtained on the trusted side after measurements are made on the un-trusted side. As a novel feature of quantum state certification via steering, we use the notion of Assemblage based robust state certification to provide robustness bounds for the certification result in the case of pure maximally entangled states of any local dimension.
\end{abstract}

\maketitle


\noindent {\em Introduction.---} 
Quantum certification and benchmarking become tasks of paramount importance as we advance towards the second quantum revolution \cite{eisert2020quantum}.  One of the prominent approaches to device certification is self-testing \cite{popescu1992generic, mayers_yao, st_review}, the idea of self-testing is to provide guarantees regarding the inner working of a device based on the measurement statistics.  Self-testing  is an important task from the point of view of both quantum foundations research \cite{robust_qdc, yang_navascues,sos_chsh, rst_multipartite} as well as many quantum information processing tasks such as randomness generation \cite{st_random, st_random_2, st_random_3}, quantum cryptography \cite{mayers_yao_2, mayers_yao, jain2017parallel} and entanglement certification \cite{st_entanglement_certification_1, st_entanglement_certification_2}. Self-testing as a method for certification of quantum states and measurements originated in the Device Independent (DI) scenario \cite{popescu1992generic, mayers_yao, mayers_yao_2} taking advantage of Bell Nonlocality \cite{bell_nonlocality_review} to make the claims for certification.  Since then, many important results have been achieved for device-independent self-testing such as self-testing of all pure bipartite entangled states \cite{coladangelo2017all}, self-testing of all multipartite entangled states that admit a Schmidt decomposition \cite{supic_multipartite}, sequential \cite{sequential_st} and parallel self-testing of many EPR pairs \cite{mckague_st_parallel,parallel_selftesting} and GHZ states \cite{parallel_ST_ghz}. Apart from the DI scenario, other scenarios have been explored such as self-testing local quantum systems using non-contextuality inequalities \cite{noncontexuality_selftesting,bharti2019local}, self testing in the prepare and measure scenario \cite{st_pm}, self testing of quantum circuits \cite{st_q_circuits} and semi-device independent (SDI) state certification based on EPR steering \cite{goswami2018one, self_testing_epr, gheorghiu2017rigidity}. In \cite{self_testing_epr}, authors are mainly concerned with showing improvements in robust state certification in the SDI scenario compared to DI scenario while in \cite{goswami2018one}, the authors have shown one sided device independent state certification of any pure two-qubit entangled state. Since the "self testing" terminology was originally defined for the DI scenario, we refrain from using it in this paper and instead refer to our approach as SDI state certification.

Quantum steering \cite{epr, schrodinger_1935, schrodinger_1936} as a phenomenon is logically different from entanglement and nonlocality, lying midway between them. Nonlocal states can be shown to be steerable and steerable states can be shown to be entangled; however, nonlocality is studied in the DI scenario while steering is studied in the SDI scenario. Applications of quantum steering have been investigated in Quantum Key Distribution (QKD) \cite{qkd_steering} using the BBM92 protocol \cite{bbm92}, where the authors showed that in the SDI scenario obtainable key rates are higher and the required detector efficiencies are lower compared to the fully DI case. Steering based randomness certification has been studied in \cite{law2014quantum, passaro2015optimal, randomness_certification_steering}. In \cite{randomness_certification_steering}, the authors show maximal randomness generation in the SDI setting from any pure entangled full-Schmidt-rank state using the maximal violation of steering inequalities. See \cite{steering_review} for a comprehensive review on quantum steering and \cite{cavalcanti_review} for a review based on semi-definite programming.

Self-testing in the DI scenario entails obtaining specific extremal correlations while treating the measurement devices as black boxes. Such extremal correlations violate inequalities such as the CHSH inequality \cite{chsh} maximally. Although SDI is a weaker notion and requires more assumptions than the fully DI case but the advantage of state certification in the SDI scenario are manifold; the additional assumptions help in alleviating mathematical difficulties such as establishing a tensor product structure in the multipartite case, or certification of complex measurements as described in \cite{self_testing_epr}, state certification using the maximal violation of a steering inequality in the SDI scenario was shown to be advantageous in a laboratory setting \cite{cavalcanti2009experimental}. Further, as we show in this paper, steering based approach renders robust state certification tractable by simplifying the mathematical analysis required. Such robustness bounds were missing in previous DI self testing result \cite{coladangelo2017all} which makes the steering based approach favourable for experimental implementations \cite{exp_SDI_selftest, exp_steering, saunders2010experimental, malik2021genuine}. Very recently, \cite{manvcinska2021constant, fu2019constant} obtained robust self testing results using correlations for different subsets of maximally entangled states; however, the correlations based approach is non-constructive in nature while in the steering case we have shown explicit bounds.

The assumptions required in SDI scenario are naturally justified, such as in the case of delegated quantum computing \cite{blind_qc}. In \cite{gheorghiu2017rigidity}, the authors discuss rigidity of quantum steering correlations via sequential steering games and show that the overhead is reduced in the steering case compared to that of CHSH game rigidity. The authors further highlight the application of steering rigidity in verifying delegated quantum computation.

Moved by the advantages of the SDI scenario, it is natural to enquire whether all pure bipartite entangled states can be certified in the SDI scenario via quantum steering. Progress in the aforementioned direction has been made in \cite{self_testing_epr,gheorghiu2017rigidity} where the authors used a linear steering inequality for certifying the maximally entangled Bell pair. One-sided device-independent certification of any two-qubit pure entangled state was shown in \cite{goswami2018one} where the authors used two steering inequalities, Fine-grained inequality (FGI) \cite{fine_grained_ineq} and analogue CHSH \cite{cavalcanti2015analog} inequalities for self-testing. However, the maximal violation of the FGI is not uniquely achieved by a particular target state, and the authors used the analogue CHSH inequality, along with maximal violation of FGI, to uniquely ascertain which state had been certified.

In this work we have used the Tilted Steering Inequality, the maximal violation of which uniquely certifies any two-qubit pure entangled state. Further, by adapting the subspace methods of \cite{coladangelo2017all} for the SDI scenario, we provide a  state certification scheme for any pure bipartite entangled state via quantum steering. Finally, we provide $O(\sqrt{\epsilon})$ robustness bounds for our steering based state certification result for maximally entangled pure qudit states which are not known in the device independent scenario.\\

\noindent {\em Steering Scenario.---}
We have considered an SDI scenario where Alice and Bob share an entangled qudit state, such that Alice performs black-box measurements on her side and Bob performs tomographic reconstruction to obtain the exact density matrix of the states on his side after Alice's measurements. 

Quantum steering was first formalised in \cite{wiseman_2007} where the authors defined steerable states as those which do not admit a Local Hidden State (LHS) model for any assemblage generated on Bob's side. The assemblage (un-normalised state) $\sigma_{a|x}$ generated on Bob's side corresponding to outcome $a$ and measurement $x$ on Alice's side is said to admit an LHS model if it can be represented as a mixture of hidden states $\rho_\lambda$ originating from a probability distribution $\mu(\lambda)$:
\begin{equation}
    \sigma_{a|x}=\int d\lambda \mu(\lambda)p(a|x,\lambda)\rho_\lambda
\end{equation}
In accordance with quantum mechanics, the assemblage generated by performing projective measurements $M_{a|x}$ on Alice's side can be written as:
\begin{equation}
    \sigma_{a|x}=tr_A[(M_{a|x}\otimes I)\rho^{AB}]
\end{equation}
where $\sum_a M_{a|x}=\mathds{1}$ and $M_{a|x}\geq 0\, \forall a,x$. In \cite{cavalcanti2015analog}, the authors described steerable states by modelling non-steerable correlations using a Local Hidden Variable-Local Hidden state (LHV-LHS) model. Given a bipartite system comprising of qubits with spatially separated parties Alice and Bob, denote $O_A$ and $O_B$ as the set of observables in the Hilbert space of Alice and Bob respectively. An element in $O_A$ will be denoted by $x$ (similarly $y$ for $O_B$) and the outcomes corresponding to $x$ will be denoted by $a\in L(x)$ (similarly $b\in L(y)$) where $L(x)$ $(L(y))$ denotes the set of outcomes for the observable $x$ $(y)$. The joint state $\rho^{AB}$ is said to be steerable if and only if it does not admit an LHV-LHS decomposition for \textbf{all} $a\in L(x)$, $b\in L(y)$, $x\in O_A$ and $y\in O_B$. An LHV-LHS decomposition is based on the idea that Alice's outcomes are determined by a local hidden variable $\lambda$ and Bob's outcomes are determined by local measurements on a quantum state $\rho_\lambda$,
\begin{equation}
    p(a,b|x,y;\rho^{AB})=\sum_\lambda \mathfrak{p}(\lambda)\mathfrak{p}(a|x,\lambda) p(b|y,\rho_\lambda) 
\end{equation}
In the SDI scenario, the LHV-LHS model can be used to establish local bounds for linear expressions giving rise to steering inequalities, violation of such inequalities implies steering (see Appendix Section \ref{local_bound_TSI}). \textit{Steerable Weight} as a quantifier of steering was proposed in \cite{steering_weight} and was further shown to be a convex steering monotone in \cite{resource_steering}. Consider a one-sided device-independent scenario where Alice and Bob share a joint quantum state $\rho^{AB}$, an arbitrary set of assemblages $\{\sigma_{a|x}\}_{a,x}$ (set of assemblages obtained overall outcomes and observables) can be decomposed as,
\begin{equation}
    \sigma_{a|x}=p_s\sigma^S_{a|x}+(1-p_s)\sigma^{US}_{a|x} \quad\forall a,x
\end{equation}
where $0\leq p_s\leq 1$, $\sigma^S_{a|x}$ is a steerable assemblage and $\sigma^{US}_{a|x}$ is an un-steerable assemblage i.e. has an LHS decomposition. The weight of the steerable part $p_s$ minimized over all possible decompositions of $\{\sigma_{a|x}\}_{a,x}$ gives the steerable weight $SW(\{\sigma_{a|x}\}_{a,x})$ of that assemblage set. \\

\noindent {\em SDI state certification---}
Self testing was originally introduced in the DI scenario where the maximal violation of the CHSH inequality was used to self test the Bell state \cite{mayers_yao, mayers_yao_2}. Such self testing procedures are based on obtaining extremal correlations $p(a,b|x,y)=tr[(M_{a|x}\otimes N_{b|y})\rho^{AB}]$, obtained by performing quantum measurements $M_{a|x}$ (acting on $\mathcal{H}_A$) and $N_{b|y}$ (acting on $\mathcal{H}_B$) on the joint quantum state  $\rho^{AB}\in \mathcal{H}_A\otimes \mathcal{H}_B$, such that the correlations $p(a,b|x,y)$ achieve the quantum supremum of a Bell inequality and the quantum states that achieve the extremal correlations are unique up to local isometries. Formally defined, the extremal correlations $p(a,b|x,y)$ self test the state and measurements  $\{|\Bar{\psi}\rangle,\Bar{M}_{a|x},\Bar{N}_{b|y}\}$ if for all states and measurements $\{|\psi\rangle,M_{a|x},N_{b|y}\}$ compatible with $p(a,b|x,y)$, there exists an isometry $\Phi=\Phi_A\otimes\Phi_B$ where $\Phi_A:\mathcal{H}_A \mapsto \mathcal{H}_A\otimes \mathcal{H}_{A'}$ and $\Phi_B:\mathcal{H}_B \mapsto \mathcal{H}_B\otimes \mathcal{H}_{B'}$ such that:
\begin{equation}
    \begin{aligned}
    \Phi(|\psi\rangle_{AB})&=|junk\rangle_{AB}|\Bar{\psi}\rangle_{A'B'} \\
    \Phi(M_{a|x}\otimes N_{b|y}|\psi\rangle_{AB})&=|junk\rangle_{AB}(\Bar{M}_{a|x}\otimes \Bar{N}_{b|y}|\Bar{\psi}\rangle_{A'B'})
    \end{aligned}
\end{equation}
where $\Bar{M}_{a|x}$ and $\Bar{N}_{b|y}$  act on $\mathcal{H}_{A'}$ and $\mathcal{H}_{B'}$ respectively.\\
Coming to the SDI scenario, for certifying any pure bipartite entangled state, using Schimdt decomposition, our target state can be written as:
\begin{equation}
\label{qudit_state}
    |\psi_{target}\rangle:=\sum_{i=0}^{d-1}c_i|ii\rangle
\end{equation}
where $0<c_i<1 \, \forall\,i$ and $\sum_{i=0}^{d-1}c_i^2=1$.
In order to certify the target state \eqref{qudit_state}, we intend to show the existence of an isometry $\Phi:\mathcal{H}_A \mapsto \mathcal{H}_A\otimes \mathcal{H}_{A'} $ on Alice's side such that:
\begin{equation}
\label{isometry}
    \begin{aligned}
    \Phi(|\psi\rangle_{AB})&=|junk\rangle_A\otimes|\psi_{target}\rangle_{A'B} \\
    \Phi(M_{a|x}|\psi\rangle_{AB})&=|junk\rangle_A\otimes \Bar{M}_{a|x}|\psi_{target}\rangle_{A'B}
    \end{aligned}
\end{equation}
where $M_{a|x}$ acts on $\mathcal{H}_{A}$; $\Bar{M}_{a|x}$ acts on $\mathcal{H}_{A'}$ and represents ideal measurements on Alice's side. Formally stated, this result is the following.
\begin{result}
\label{main_result}
In the SDI scenario, for any bipartite pure entangled state $|\psi_{target}\rangle$, there exists an ideal set of assemblages $\{\sigma_{a|x}\}_{a,x}^{ideal}$, where $x\in \{0,1,2\}$ and $a\in \{0,\ldots,d-1\}$, which when observed on Bob's side after Alice performs black box measurements on their joint state $\rho^{AB}$ certifies the target state $|\psi_{target}\rangle$ and ideal measurements on Alice's side.
\end{result}

\begin{figure}[htbp]
  \includegraphics[width=\columnwidth,height=7.5cm]{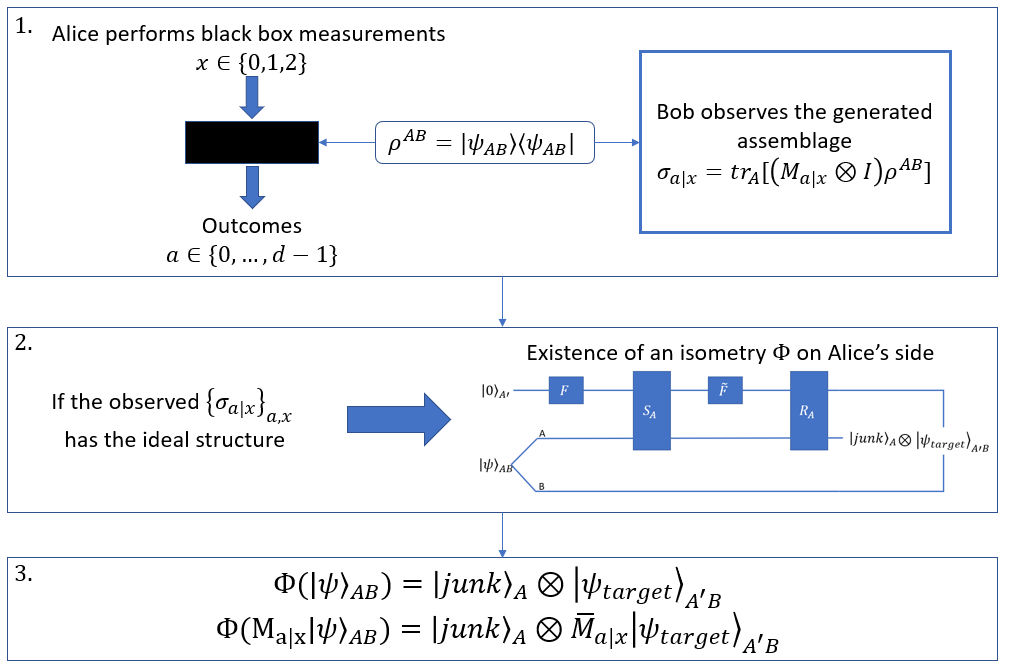}
  \caption{SDI state certification scheme: Step 1. Alice performs uncharacterized measurements on her side with three measurement choices $x\in\{0,1,2\}$ and records the outcomes $a\in\{0,\ldots,d-1\}$. For each measurement $x$ and outcome $a$, Bob measures the assemblage generated on his side $\sigma_{a|x}$. The experiment is repeated several times for each measurement setting $x$ to build the complete assemblage set on Bob's side. Step 2. If the observed assemblage set $\{\sigma_{a|x}\}_{a,x}$ has the ideal structure (see Appendix Section \ref{assemblage_structure}) then the isometry $\Phi$ (see Fig. \ref{Alice_isometry}) can be shown to exist on Alice's side. Step 3. The isometry $\Phi$ certifies the target state \eqref{qudit_state} and the ideal measurements as shown in \autoref{sufficient_lemma} in Appendix Section \ref{SDI_proof_ST}.}  
  \label{st_scheme}
\end{figure}
\noindent {\em Proof Sketch.---}
Here we provide a sketch of the proof for this result, the detailed proof is given in Appendix \ref{SDI_proof_ST}. Since our aim is to certify the state \eqref{qudit_state} by certifying the two dimensional subspace projected blocks, we first establish the Tilted Steering Inequality (TSI) analogous to the tilted CHSH inequality \cite{tiltedCHSH}, the maximal violation of which uniquely certifies any two-qubit pure entangled state $|\psi(\theta)\rangle=\cos\theta|00\rangle+\sin\theta|11\rangle$ in the SDI scenario. The TSI is a two parameter inequality given by:
\begin{equation}
\label{TSI_MT}
    I_{\alpha,\beta}\equiv \alpha\langle A_0\rangle+\beta\langle A_0 Z_s\rangle+\langle A_1 X_s\rangle\leq\alpha+\sqrt{1+\beta^2}
\end{equation}
where $\beta, \alpha>0$, $A_0$ and $A_1$ refer to 2-outcome black-box measurements on Alice's side and $Z_s=\frac{|0\rangle \langle0| - |1\rangle \langle1|}{2}$ and $X_s=\frac{|0\rangle \langle1| + |1\rangle \langle0|}{2}$ refer to local schmidt basis measurements performed by Bob. In order to certify a given bipartite pure entangled state $|\psi(\theta)\rangle$, we require 2 additional constraints i.e. $\sin2\theta = 1/\beta$ and impose a condition on the parameters $\alpha$ and $\beta$ such that $\beta^2 = \alpha^2 + 1$. Under these 2 conditions the state $|\psi(\theta)\rangle$ maximally violates TSI uniquely and this maximal violation is achieved when Alice's local schmidt basis measurements are parallel to that of Bob's i.e. Alice makes $A_0=\frac{|0\rangle \langle0| - |1\rangle \langle1|}{2}$ and $A_1=\frac{|0\rangle \langle1| + |1\rangle \langle0|}{2}$ measurements.

In order to calculate the local and quantum bounds for the linear expression $I_{\alpha,\beta}$, we use the concepts defined earlier, namely the LHV-LHS model and steerable weight. In the LHV-LHS model, Alice's measurement outcomes are determined using a probabilistic LHV model and Bob's outcomes come from local measurements on a quantum state. We calculate the local bound for TSI within the LHV-LHS model by individually maximising the terms of $I_{\alpha,\beta}$ over the allowed set of outcome probabilities for Alice and Bob. 
For calculating the quantum bound, we first consider a property of steerable weight ($SW(\sigma_{a|x})$) given in \cite{cavalcanti_review}; $SW(\sigma_{a|x})$ is bounded by any convex function $f(.)$ of that assemblage, where $f(.)$ can be the violation of a steering inequality. Using this property, it can be further shown that:
\begin{equation}
\label{SW_prop}
    SW(\sigma_{a|x}) \geq \frac{f(\sigma_{a|x})-f^{LHS}_{max}}{f_{max}-f^{LHS}_{max}},
\end{equation}
where $f_{max}$ and $f^{LHS}_{max}$ are the maximal value of $f(.)$ among all possible assemblages and among all possible LHS assemblages respectively. From the above equation it can be inferred that SW of an assemblage that gives the maximal violation of a steering inequality must be greater than or equal to 1, which is in fact the maximum value that SW can take. Hence, SW of the assemblage that achieves the maximal violation must be 1. Then we show that the quantum bound cannot be achieved by an assemblage generated from a mixed two-qubit entangled state using \eqref{SW_prop} and a lemma from \cite{goswami2018one}. Having eliminated mixed entangled states, we maximise the expression $\alpha\langle A_0\rangle+\beta\langle A_0 B_0\rangle+\langle A_1 B_1\rangle$ over all general projective measurements performed by Alice and Bob on a general pure two-qubit entangled state $|\psi(\theta)\rangle=\cos\theta|00\rangle+\sin\theta|11\rangle$ where $0<\theta<\pi/2$ thus giving the quantum bound $\sqrt{2(1 + \alpha^2 + \beta^2)}$. Kindly note that since we fix the local bases to be the Schmidt bases we cannot write the optimal Bob's measurements as $Z$ and $X$. However, the local bases can always be rotated to give measurement directions as $Z$ and $X$.

We show explicitly in Appendix \ref{quantum_bound_TSI} that for all values of $\alpha, \beta >0$, the quantum bound is always greater than or equal to the local bound; however, for the purpose of certifying the state $|\psi(\theta)\rangle$ we impose an additional condition $\beta^2 = \alpha^2 + 1$ on the parameters such that the state $|\psi(\theta)\rangle$ achieves the maximal violation $\sqrt{2(1 + \alpha^2 + \beta^2)}$ which is the quantum bound. The detailed proofs for obtaining the local and quantum bounds are given in Appendix \ref{TSI}.

Secondly we look for a laboratory fingerprint of the qudit state which can be used to show the existence of an isometry, as are the correlations in the DI case; naturally this is achieved by using assemblages in the SDI scenario where we impose a $2\times2$ outer product structure on the ideal  assemblages generated on Bob's side for certain measurements of Alice (see Appendix Section \ref{assemblage_structure} for details). For certifying a general state of the form \eqref{qudit_state}, we show that Alice needs to perform 3 $d$-outcome measurements on her side; the structure imposed on the assemblage set $\{\sigma_{a|x}\}_{a,x}^{ideal}$, where $a\in\{0,\ldots,d-1\}$ and $x\in\{0,1,2\}$, is such that for measurement settings $x\in\{0,1\}$ the pairs $c_{2m}|2m,2m\rangle+c_{2m+1}|2m+1,2m+1\rangle$ and for measurement settings $x\in\{0,2\}$ the pairs $c_{2m+1}|2m+1,2m+1\rangle+c_{2m+2}|2m+2,2m+2\rangle$ for $m=\{0,\ldots,\frac{d}{2}-1\}$ are certified respectively. The intuition behind certifying 2 different sets of pairs, similar to the DI case \cite{coladangelo2017all}, is that the maximal violation of tilted steering inequality certifies the corresponding 2 dimensional normalised projections $|\psi_m\rangle=\frac{c_{2m}|2m,2m\rangle+c_{2m+1}|2m+1,2m+1\rangle}{\sqrt{c_{2m}^2+c_{2m+1}^2}}$ or $|\psi_m\rangle=\frac{c_{2m+1}|2m+1,2m+1\rangle+c_{2m+2}|2m+2,2m+2\rangle}{\sqrt{c_{2m+1}^2+c_{2m+2}^2}}$ of the target state \eqref{qudit_state}. That is, the maximal violation of a particular $I_{\alpha,\beta}$ certifies a state of the form $|\psi(\theta)\rangle$ satisfying $\sin2\theta=\frac{1}{\beta}$ which certifies the ratio $\tan\theta=\frac{c_{2m+1}}{c_{2m}}$ or $\tan\theta=\frac{c_{2m+2}}{c_{2m+1}}$ of the coefficients. Hence, we obtain $d/2$ relations between the coefficients from the set corresponding to measurements $x\in\{0,1\}$ and another $d/2$ from the set corresponding to $x\in\{0,2\}$ thus uniquely determining the $d$ coefficients of the target state \eqref{qudit_state}.\\

\begin{figure}
  \includegraphics[width=\columnwidth]{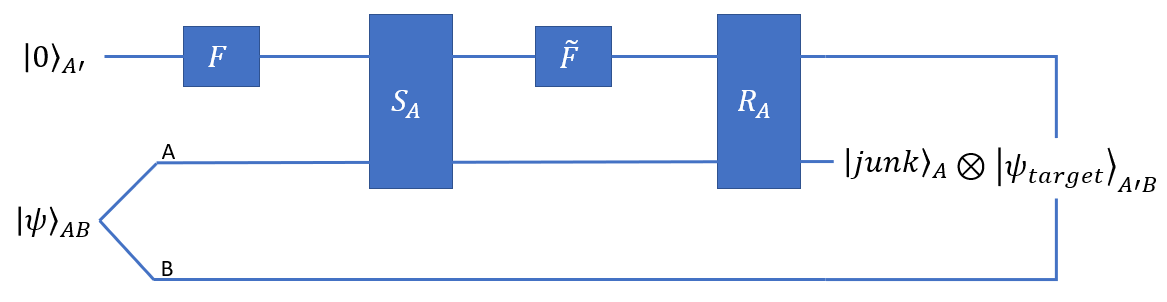}
  \caption{Circuit diagram for the local isometry $\Phi$ on Alice's side. Gates $F$ and $\Bar{F}$ are quantum Fourier transform and inverse quantum Fourier transform respectively. Gate $R_{AA'}$ is defined as $R_{AA'}|\psi\rangle_{AB}|k\rangle_{A'}=X_A^{(k)}|\psi\rangle_{AB}|k\rangle_{A'}$ where $X_A^{(k)}$ are unitary operators and gate $S_{AA'}$ is defined as $S_{AA'}|\psi\rangle_{AB}|k\rangle_{A'}=Z_A^{k}|\psi\rangle_{AB}|k\rangle_{A'}$ where $Z_A:=\sum_{k=0}^{d-1} \omega^k P_A^{(k)}$ and $P_A^{(k)}$ are projections corresponding to measurement $x=0$ (refer to Appendix Section \ref{SDI_proof_ST} for complete descriptions of all the gates).}
  \label{Alice_isometry}
\end{figure}

Finally, to formally show the existence of an isometry \eqref{isometry}, we construct projection operators $\{P_A^{(k)}\}_{k=0,\ldots,d-1}$ and unitaries $X_A^{(k)}$ on Alice's side which arise as a consequence of the structure imposed on the assemblage set $\{\sigma_{a|x}\}_{a,x}$. We further show that these projection operators and unitaries satisfy the condition:
\begin{equation}
    X_A^{(k)}P_A^{(k)}|\psi\rangle=c_k|0,k\rangle \quad \forall k
\end{equation}
which is sufficient for proving the existence of an isometry (Fig. \ref{Alice_isometry}) which certifies the target state and the ideal measurements \eqref{isometry} (See \autoref{sufficient_lemma} in Appendix Section \ref{SDI_proof_ST} for proof).\\

\noindent {\em Assemblage based Robust SDI state certification ---}
The notion of assemblage based robust certification was introduced in \cite{self_testing_epr}, where the authors used their assemblage based strategy to analytically derive better robustness bounds for certifying the singlet state i.e. the maximally entangled 2-qubit state as compared to the bounds that were known for the device independent case. To formalise this notion, the distance measure used is the Trace Distance between 2 quantum states; $\dis(\rho_1, \rho_2)$ which is related to the Schatten-1 norm $\snorm{\rho_1 - \rho_2}$ as $\dis(\rho_1, \rho_2) = \frac{1}{2}\snorm{\rho_1 - \rho_2}$. Having established a notion of closeness of 2 quantum states, we state our RST result as:

\begin{main result}
\label{robust_result}
For the maximally entangled pure qudit state $|\Bar{\psi}_{target}^{max}\rangle = \frac{1}{\sqrt{d}}\sum_{i=0}^{d-1}|ii\rangle$ of dimension $d$ with $\Bar{\rho} = |\Bar{\psi}_{target}^{max}\rangle \langle \Bar{\psi}_{target}^{max}|$, if the experimentally obtained assemblages $\{\sigma_{a|x}\}_{a,x}$ and reduced state $\rho_B$ on Bob's side are close to the reference set of assemblages $\{\Bar{\sigma}_{a|x}\}_{a,x}$ and reduced state $\Bar{\rho}_B$ such that $\lVert \sigma_{a|x} - \Bar{\sigma}_{a|x} \rVert_1 \leq \epsilon \quad \forall \, a,x$ and $\dis(\rho_B, \Bar{\rho}_B) \leq \epsilon$ then:
\begin{widetext}
\begin{gather}
         \dis(|\Phi\rangle\langle \Phi|, \rho_{junk} \otimes \Bar{\rho}_{A'B}) \leq \frac{1}{2}(4d^2\sqrt{\epsilon} + d^3\epsilon\sqrt{\epsilon} + d\epsilon) \\
        \snorm{|\Phi,M_{a|x}\rangle\langle \Phi,M_{a|x}| - \rho_{junk} \otimes (\Bar{M}_{a|x}\otimes I_B)|\Bar{\psi}_{target}^{max}\rangle \langle \Bar{\psi}_{target}^{max}|(\Bar{M}_{a|x}\otimes I_B)} \leq 4(d^2+d)\sqrt{\epsilon} + \epsilon
\end{gather}
\end{widetext}
where $|\Phi\rangle := \Phi|\psi\rangle_{AB}|0\rangle_{A'}$, $|\Phi,M_{a|x}\rangle := \Phi(M_{a|x}\otimes I_B)|\psi\rangle_{AB}|0\rangle_{A'}$ and $|\psi\rangle_{AB}$ is the purification of the reduced state $\rho_B$. (See Appendix $\ref{robustness_appendix}$ for detailed derivation)
\end{main result}

\noindent {\em Discussion and Open Problems.---}
In this paper, we have shown SDI certification of any pure bipartite entangled state by certifying two-dimensional subspace projections of the qudit state using a tilted steering inequality adapting the subspace methods previously employed in \cite{coladangelo2017all, yang_navascues}. More importantly we derived robustness bounds for this certification in the case of the maximally entangled pure qudit state of dimension $d$, such explicit robust bounds are not known in the device independent scenario. As shown in Appendix Section \ref{ideal_measurements}, in the ideal case Alice needs to make 3 measurements on her side in order to obtain the set of ideal assemblages which is sufficient for the certification task. Also, it can be easily seen that in the SDI scenario the same set of ideal measurements can be used to certify any qudit state of a given dimension. However, the ideal set of assemblages that we have used to show the existence of an isometry is not necessarily the only set of assemblages that certifies a given bipartite pure entangled state. It has been shown in \cite{steering_weight} that for a set of assemblages generated by performing 2 $d$-outcome projective measurements on any pure bipartite entangled state, a single linear steering inequality can be obtained which is maximally violated by that set of assemblages. Results in this direction could be used to show certification of a bipartite pure entangled state using a single steering inequality obtained using SDP methods, furthermore such a procedure could be shown to require only 2 measurements on Alice's side.

\emph{Acknowledgement. ---}
The authors thank Debarshi Das for providing an improved and complete proof for Result \ref{Result_1_2qubit} given in Appendix \ref{ST_2qubit} which has been incorporated in the draft.
Additionally, the authors would like to thank the National Research Foundation and the Ministry of Education, Singapore for financial support.

\onecolumngrid
\appendix

\section{Tilted Steering Inequality}
\label{TSI}
In this section we will show detailed calculations for obtaining the local and quantum bounds for the tilted steering expression $I_{\alpha,\beta}\equiv \alpha\langle A_0\rangle+\beta\langle A_0Z\rangle+\langle A_1X\rangle$. For obtaining the local bound we will maximise $I_{\alpha,\beta}$ for the correlations obeying the LHV-LHS model and for obtaining the quantum bound (under the condition $\beta>\alpha>0$) we will consider arguments involving steerable weight and entangled states. Further, we will show that the maximal violation of TSI uniquely certifies any 2-qubit pure entangled state.

\subsection{Local bound of Tilted Steering Inequality}
\label{local_bound_TSI}
Let us first recap the LHV-LHS model here for completeness. Given a bipartite system comprising of qubits with spatially separated parties Alice and Bob, denote $O_A$ and $O_B$ as the set of observables in the Hilbert space of Alice and Bob respectively. An element in $O_A$ will be denoted by $x$ (similarly $y$ for $O_B$) and the outcomes corresponding to $x$ will be denoted by $a\in L(x)$ (similarly $b\in L(y)$) where $L(x)$ $(L(y))$ denotes the set of outcomes for the observable $x$ $(y)$.
The joint state $\rho^{AB}$ is said to be steerable iff it does not admit a LHV-LHS decomposition for \textbf{all} $a\in L(x)$, $b\in L(y)$, $x\in O_A$ and $y\in O_B$. A LHV-LHS decomposition is based on the idea that Alice's outcomes are determined by a local hidden variable $\lambda$ and Bob's outcomes are determined by local measurements on a quantum state $\rho_\lambda$,
\begin{equation}
    P(a,b|x,y;\rho^{AB})=\sum_\lambda \mathfrak{p}(\lambda)\mathfrak{p}(a|x,\lambda) p(b|y,\rho_\lambda) \label{lhvlhs}
\end{equation}
For a given scenario (2 measurement 2 outcome in this case) the correlations that have a LHV-LHS model form a convex set \cite{cavalcanti2009experimental}, hence we can express any LHV-LHS model as a convex combination of extremal points of that convex set. This implies that we can decompose $\mathfrak{p}(a|x,\lambda) p(b|y,\rho_\lambda)$ into $\sum_\chi \int d\xi\mathfrak{p}(\chi,\xi|\lambda)\delta_{a,f(A,\chi)}P(b|y;|\psi_\xi\rangle\langle\psi_\xi|)$, where $\chi$ are the variables which determine the extremal outcome strategies for Alice via the function $f(A,\chi)$, and $\xi$ determines a \textbf{pure} state $|\psi_\xi\rangle$ for Bob. Thus we can simplify \eqref{lhvlhs} as:
\begin{equation}
    P(a,b|x,y)=\sum_{\chi,\xi}\mathfrak{p}(\chi,\xi)\delta_{a,f(x,\chi)}\langle\psi_\xi|\Pi_{b|y}|\psi_\xi\rangle. \label{lhvextreme}
\end{equation}

To calculate the local bound of our Tilted steering inequality using the LHV-LHS model \eqref{lhvlhs} where Alice and Bob each have a choice between 2 dichotomic observables given by: $\{x=0,x=1\},\{y=0,y=1\}$. As shown in equation \eqref{lhvextreme} we can express any joint probability distribution obeying the LHV-LHS model as convex sum of extremal points in that set. We will briefly review the method to obtain these extremal points for Alice's and Bob's side as illustrated in \cite{cavalcanti2015analog}. For Alice's side with 2 observables $x=0,x=1$ with outcomes $\pm1$, the 4 deterministic strategies are given by,
\begin{equation}
\label{Aextreme}
    p^{1|0}=1,p^{1|1}=1;\quad p^{1|0}=1,p^{1|1}=0;\quad p^{1|0}=0,p^{1|1}=1;\quad p^{1|0}=0,p^{1|1}=0;
\end{equation}
and let us attach these 4 strategies with 4 labels $\chi\in \{1,2,3,4\}$. Moving ahead to Bob's side, since Bob's probabilities arise from measurements on a quantum state they have to obey constraints such as uncertainty relations. Labelling the basis of eigenstates Bob's observable $y=0$ as $\{|1\rangle, |-1\rangle\}$, we can write the generalised projector for outcome $1$ of observable $y=1$, parameterized by $\mu$ and $\phi$, as:
\begin{equation}
    \label{obsB'}
    \Pi_{1|1}=(\sqrt{\mu}|1\rangle+\sqrt{1-\mu}e^{i\phi}|-1\rangle)\times(\sqrt{\mu}\langle1|+\sqrt{1-\mu}e^{-i\phi}\langle-1|).
\end{equation}
Similarly a general pure state can be written as:
\begin{equation}
    |\psi_{\mu',\phi'}\rangle=\sqrt{\mu'}|1\rangle+\sqrt{1-\mu'}e^{i\phi'}|-1\rangle;
\end{equation}
then the measurement probabilities of outcome $1$ for $y=0$ and $y=1$ are:
\begin{equation}
    \begin{aligned}
    p^{1|0}(\mu',\phi') &\equiv \langle\psi_{\mu',\phi'}|\Pi_{1|0}|\psi_{\mu',\phi'}\rangle=\mu',\\ 
    p^{1|1}(\mu',\phi') &= \mu\mu'+(1-\mu)(1-\mu')+2\sqrt{\mu(1-\mu)\mu'(1-\mu')}\cos(\phi'-\phi). 
    \end{aligned}
\end{equation}
From the above equations, we see that the set of allowed quantum probabilities $(p^{1|0},p^{1|1})$ form the convex hull of an ellipse and the boundaries (extreme values) are achieved when $\cos(\phi'-\phi)=\pm1$. For the extreme points, we can reparameterize the ellipse as:
\begin{equation}
    \label{Bextreme}
    \begin{aligned}
    p^{1|0}(\xi)-\frac{1}{2} &= \frac{1}{2}\lbrack\sqrt{\mu}\cos(\xi)-\sqrt{1-\mu}\sin(\xi)\rbrack,\\ 
    p^{1|1}(\xi)-\frac{1}{2} &=  \frac{1}{2}\lbrack\sqrt{\mu}\cos(\xi)+\sqrt{1-\mu}\sin(\xi)\rbrack.
    \end{aligned}
\end{equation}
Assuming the LHV-LHS model \eqref{lhvextreme} for probability distributions, the correlation terms of our inequality have an LHV-LHS model (LHV model for $\langle A_0\rangle$) if and only if they can be written as:
\begin{equation}
\label{Alhv}
    \begin{split}
        \langle A_0\rangle &= p^{1|0}-p^{-1|0}=2p^{1|0}-1\\
        &= \sum_{\chi}\mathfrak{p}(\chi)(2p^{1|0}(\chi)-1);
    \end{split}
\end{equation}
\begin{equation}
\label{ABlhv}
    \begin{split}
        \langle A_0B_0\rangle &= P(a=b|x=0,y=0)-P(a=-b|x=0,y=0)\\
        &= \sum_{\chi}\int_{\xi}\mathfrak{p}(\chi,\xi)(2p^{1|0}(\chi)-1)(2p^{1|0}(\xi)-1);
    \end{split}
\end{equation}
\begin{equation}
\label{A'B'lhv}
    \langle A_1B_1\rangle= \sum_{\chi}\int_{\xi}\mathfrak{p}(\chi,\xi)(2p^{1|1}(\chi)-1)(2p^{1|1}(\xi)-1).
\end{equation}
In the above expressions we have expressed the separate terms of our inequality as convex sums of extremal probabilities. From eqn. \eqref{obsB'} we see that $\mu=1/2$ refers to the case when the observables $y=0$ and $y=1$ correspond to orthogonal spin measurements which is required in our case since Bob is making spin measurements along $Z,X$, in this case equation \eqref{Bextreme} can further be simplified as (note that $\xi$ below is different from that used in eqn. \eqref{Bextreme} but the symbol has been kept same):
\begin{equation}
    \label{Breextreme}
    \begin{aligned}
    2p^{1|0}(\xi)-1 &= \cos\xi,\\ 
    2p^{1|1}(\xi)-1 &=  \sin\xi.
    \end{aligned}
\end{equation}
Using equations \eqref{Aextreme} and \eqref{Breextreme}, we can further simplify the terms of our inequality for each value of $(\chi,\xi)$ as:
\begin{equation}
\label{extremetable}
    \begin{aligned}
     &\\
        & \langle A_0\rangle \\
        & \langle A_0B_0\rangle (\xi) \\
        & \langle A_1B_1\rangle (\xi)
    \end{aligned}
    \begin{aligned}
        & \quad \chi=1 \\
        & \quad 1 \\
        & \quad \cos\xi \\
        & \quad \sin\xi
    \end{aligned}
    \begin{aligned}
        & \quad \chi=2 \\
        & \quad 1 \\
        & \quad \cos\xi \\
        & \quad -\sin\xi
    \end{aligned}
    \begin{aligned}
        & \quad \chi=3 \\
        & \quad -1 \\
        & \quad -\cos\xi \\
        & \quad \sin\xi
    \end{aligned}
    \begin{aligned}
        & \quad \chi=4 \\
        & \quad -1 \\
        & \quad -\cos\xi \\
        & \quad -\sin\xi
    \end{aligned}
\end{equation}
For $\chi=1$ column in \eqref{extremetable}, the following holds:
\begin{equation}
    \begin{split}
        I_{\alpha,\beta} &\equiv \alpha\langle A_0\rangle+\beta\langle A_0B_0\rangle+\langle A_1B_1\rangle \\
        & =\alpha+\beta \cos\xi+\sin\xi \\
        & \leq \alpha+\sqrt{1+\beta^2}
    \end{split}
\end{equation}
One can similarly verify that for all columns in \eqref{extremetable}, $I_{\alpha,\beta} \leq \alpha+\sqrt{1+\beta^2}$ is satisfied and therefore it must also be satisfied for any convex combination taken over $\chi$ and $\xi$ as shown in equations \eqref{Alhv}, \eqref{ABlhv} and \eqref{A'B'lhv}. By obtaining this local bound, we can say that if for a quantum state $I_{\alpha,\beta}>\alpha+\sqrt{1+\beta^2}$ then the probability statistics obtained from that state do not admit a LHV-LHS decomposition and hence the state is steerable. As a result:
\begin{equation}
\label{local_bound_result}
    I_{\alpha,\beta}\equiv \alpha\langle A_0\rangle+\beta\langle A_0Z\rangle+\langle A_1X\rangle\leq\alpha+\sqrt{1+\beta^2}
\end{equation}
is a valid steering inequality.

\subsection{Quantum bound of Tilted Steering Inequality}
\label{quantum_bound_TSI}
First let's look at the steerable weight of the set of assemblages that achieve the maximal violation of any general steering inequality. One of the interesting properties of steerable weight ($SW(\sigma_{a|x})$) is that it is bounded by any convex function $f(.)$ of that assemblage as given in \cite{cavalcanti_review}, where $f(.)$ can be the violation of a steering inequality. Using this property, it can be further shown that:
\begin{equation}
    SW(\sigma_{a|x}) \geq \frac{f(\sigma_{a|x})-f^{LHS}_{max}}{f_{max}-f^{LHS}_{max}},
\end{equation}
where $f_{max}$ and $f^{LHS}_{max}$ are the maximal value of $f(.)$ among all possible assemblages and among all possible LHS assemblages respectively. From the above equation it can be inferred that SW of an assemblage that gives the maximal violation of a steering inequality must be greater than or equal to 1, which is in fact the maximum value that SW can take. Hence, SW of the assemblage that achieves the maximal violation must be 1.

Now we will identify the candidate states which could achieve the maximal violation. Maximal violation of a steering inequality (maximally steerable states) implies that the initial state that generated the steerable assemblage must be a steerable state and since steerable states are a subset of entangled states, the maximally steerable state is an entangled state. It was proven in \cite{goswami2018one} that steerable weight of an assemblage generated by an arbitrary bipartite qubit mixed entangled state cannot be equal to 1. Therefore, in order to find the maximal quantum violation we focus only on pure 2-qubit entangled states.

A general pure 2-qubit entangled state can be written as (using Schmidt decomposition):
\begin{equation}
\label{purestate}
    |\psi(\theta)\rangle=\cos\theta|00\rangle+\sin\theta|11\rangle; \quad 0<\theta<\pi/2
\end{equation}
Since we have taken the Schmidt decomposition as the generalised pure state, the expression that we are going to maximise would be
\begin{equation}
\label{I_general}
\alpha\langle A_0\rangle + \beta\langle A_0 B_0\rangle + \langle A_1 B_1\rangle
\end{equation}
since optimal Bob's measurements might not be $Z$ and $X$ in the schmidt basis, however $B_0$ and $B_1$ projective measurement directions must remain perpendicular. This maximisation of \eqref{I_general} is equivalent to the maximisation of the TSI since the local bases of Bob can always be rotated to make the optimal Bob's measurements as $Z$ and $X$, however the form of our general entangled state $|\psi(\theta)\rangle$ would change in that case. We proceed to find the quantum bound by maximising \eqref{I_general} over all general pure 2-qubit entangled states and general projective measurements on Alice's and Bob's side which can be written as $A_u=\vec{a_u}.\vec{\sigma}$ and $B_u=\vec{b_u}.\vec{\sigma}$ where $\vec{a_u}=(a_{ux},a_{uy},a_{uz})$ and $\vec{b_u}=(b_{ux},b_{uy},b_{uz})$ are unit vectors with $\vec{b_0} \perp \vec{b_1}$. One can easily verify that the state $|00\rangle$ achieves the value $\alpha + \sqrt{1+\beta^2}$ with measurements $\vec{a_0}=\vec{a_1}=\hat{z}$, $\vec{b_0}=(cos\mu, 0, sin\mu)$ and $\vec{b_1}=(-sin\mu, 0, cos\mu)$ where $cos\mu = 1/\sqrt{1+\beta^2}$, thus the quantum bound is always greater than or equal to the local bound for any choice of parameters $\alpha, \beta$.
Let us now rewrite the state $|\psi(\theta)\rangle$ as $\rho = |\psi(\theta)\rangle \langle \psi(\theta)|$ where,
\begin{equation}
    \rho = \frac{I}{4} + cos2\theta\frac{\sigma_z \otimes I}{4} + cos2\theta\frac{I \otimes \sigma_z}{4} + \sum_{i,j}{T_{ij}\frac{\sigma_i \otimes \sigma_j}{4}},
\end{equation}
and $T_{xx}=sin2\theta$, $T_{yy}=-sin2\theta$, $T_{zz}=1$ and $T_{ij}=0$ for $i\neq j$.
Consider the expression $I_{\beta} \equiv \beta\langle A_0 B_0\rangle + \langle A_1 B_1\rangle$ which is evaluated using $\rho$ and measurements $A_u$ and $B_u$ as:
\begin{equation}
    I_\beta = \beta(\vec{a_0}.T\vec{b_0}) + \vec{a_1}.T\vec{b_1}
\end{equation}
The expression above is maximized over all measurements $A_{0,1}$ and $B_{0,1}$ when $\vec{a_0} \parallel T\vec{b_0}$ and $\vec{a_1} \parallel T\vec{b_1}$ giving:
\begin{equation}
    I_\beta = \beta|T\vec{b_0}| + |T\vec{b_1}|
\end{equation}
with $\vec{b_0} = (cos\mu, 0, sin\mu)$ and $\vec{b_1} =  (-sin\mu, 0, cos\mu)$ since $\vec{b_0}$ and $\vec{b_1}$ can always be considered to be in the X-Z plane. Thus it follows,
\begin{equation}
    I_\beta = \beta\sqrt{sin^2 2\theta cos^2\mu + sin^2\mu} + \sqrt{sin^2 2\theta sin^2\mu + cos^2\mu}
\end{equation}
Differentiating the above expression with respect to $\mu$, we see maxima occurs at $cos^2\mu = \frac{1-\beta^2sin^2 2\theta}{cos^2 2\theta(1+\beta^2)}$ giving $I_\beta^{max} = \sqrt{(1+\beta^2)(1+sin^2 2\theta)}$.
Moving on to the expression \eqref{I_general}, let's first make an observation about the single party correlation $\langle A_0 \rangle$ i.e. for the state $\rho$ and general Pauli observable $A_0$ :  $-cos2\theta \leq \langle A_0 \rangle \leq cos2\theta$ and the extremal values are achieved when $A_0=\pm\sigma_z$. Thus we have,
\begin{equation}
\label{eq_maximisation}
\begin{aligned}
    \alpha\langle A_0\rangle + \beta\langle A_0 B_0\rangle + \langle A_1 B_1\rangle &\equiv \alpha\langle A_0\rangle + I_\beta \\
    & \leq \alpha\, cos2\theta + I_\beta^{max} \\
    & = \alpha\,cos2\theta + \sqrt{(1+\beta^2)(1+sin^2 2\theta)}
\end{aligned}    
\end{equation}
On maximising the expression $\alpha cos2\theta + \sqrt{(1+\beta^2)(1+sin^2 2\theta)}$ w.r.t $\theta$ we see that maxima occurs at $sin^2 2\theta = \frac{1+\beta^2 - \alpha^2}{1+\beta^2 + \alpha^2}$ and the minima at $sin2\theta=0$. Substituting the expression for $sin2\theta$ corresponding to the maxima, we obtain the quantum bound of \eqref{I_general} as $\sqrt{2(1+\beta^2 + \alpha^2)}$. Note that the equality in the second line of \eqref{eq_maximisation} exists only when certain conditions are fulfilled, which are:
\begin{equation}
\label{I_conditions}
    \vec{a_0} \parallel T\vec{b_0}, \, \vec{a_1} \parallel T\vec{b_1} \mbox{ and } A_0 = \sigma_z
\end{equation}
For these conditions to be true we require the $I_\beta^{max}$ to occur at $\vec{b_1} = \hat{x}$ and $\vec{b_0} = \hat{z} \implies cos\mu = 0$ giving $\sin2\theta = \frac{1}{\beta}$. Since the maxima $\sqrt{2(1+\beta^2+\alpha^2)}$ occurs at $sin^2 2\theta = \frac{1+\beta^2 - \alpha^2}{1+\beta^2 + \alpha^2}$, we have:
\begin{equation}
    \frac{1+\beta^2 - \alpha^2}{1+\beta^2 + \alpha^2} = \frac{1}{\beta^2} \implies \beta^2 = \alpha^2 + 1
\end{equation}
Thus for a general state $|\psi(\theta)\rangle$ \eqref{purestate}, if we design the TSI such that $\beta = 1/sin2\theta$ and $\beta^2 = \alpha^2 + 1$ then the quantum bound $\sqrt{2(1+\beta^2+\alpha^2)} (=2\beta)$ is achieved only when Alice and Bob make the measurements $A_0=Z$, $A_1 = X$, $B_0 = Z$ and $B_1 = X$ in their respective local schmidt bases. 

\subsection{SDI certification of 2-qubit pure entangled state using Tilted Steering Inequality}
\label{ST_2qubit}
\begin{lemma}
\label{lem1_all}
In the steering scenario where Bob, the trusted party, performs mutually unbiased spin measurements along Z and X; maximal violation ($\sqrt{2(1+\beta^2+\alpha^2)}$) of the Tilted Steering Inequality (with $\beta^2 = \alpha^2 + 1$) occurs if and only if the 2-qubit joint state shared between Alice and Bob is:
\begin{equation}
\label{qubit_state}
    |\psi(\theta)\rangle=\cos\theta|00\rangle+\sin\theta|11\rangle; \mbox{ such that\ }\sin2\theta=\frac{1}{\beta}=\frac{1}{\sqrt{1+\alpha^2}}
\end{equation}
up to local unitary transformations and Alice performs spin measurements along Z and X i.e. $A_0=|0\rangle\langle0|-|1\rangle\langle1|$ and $A_1=|+\rangle\langle+|-|-\rangle\langle-|$ (or their local unitary equivalents) where $|\pm\rangle=\frac{|0\rangle\pm|1\rangle}{\sqrt{2}}$. 
\end{lemma}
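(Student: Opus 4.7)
The plan is to establish the ``only if'' direction (uniqueness) in three steps, since the ``if'' direction is an immediate substitution: plugging $|\psi(\theta)\rangle$ with $\sin 2\theta = 1/\beta$ together with $A_0 = \sigma_z$ and $A_1 = \sigma_x$ into $I_{\alpha,\beta}$ recovers the quantum bound $\sqrt{2(1+\alpha^2+\beta^2)} = 2\beta$ computed in Appendix~\ref{quantum_bound_TSI}.

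First I would reduce the shared state to a pure two-qubit state. As observed in Appendix~\ref{quantum_bound_TSI}, any assemblage realising the maximal violation of a steering inequality must have steerable weight equal to $1$; combining this with the lemma of \cite{goswami2018one}, which rules out $SW=1$ for assemblages generated by mixed two-qubit entangled states, forces the shared state to be a pure two-qubit state. Applying the Schmidt decomposition (and absorbing the Schmidt bases into local unitaries, as permitted by the lemma statement) puts the state into the canonical form $|\psi(\theta)\rangle = \cos\theta|00\rangle + \sin\theta|11\rangle$ for some $\theta \in (0,\pi/2)$.

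Second, with Bob's observables fixed along $Z$ and $X$, I would parameterise Alice's most general dichotomic projective observables as $A_u = \vec{a}_u\cdot\vec{\sigma}$ with $|\vec{a}_u| = 1$ and evaluate the three correlators on $|\psi(\theta)\rangle$. A direct check using $\sigma_x|0\rangle = |1\rangle$, $\sigma_y|0\rangle = i|1\rangle$ and orthogonality of $|00\rangle,|11\rangle$ to $|01\rangle,|10\rangle$ yields $\langle A_0\rangle = a_{0z}\cos 2\theta$, $\langle A_0 Z\rangle = a_{0z}$ and $\langle A_1 X\rangle = a_{1x}\sin 2\theta$, so
\begin{equation*}
I_{\alpha,\beta} \;=\; a_{0z}\bigl(\alpha\cos 2\theta + \beta\bigr) + a_{1x}\sin 2\theta,
\end{equation*}
which is linear in $a_{0z},a_{1x}\in[-1,1]$. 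Because $\beta^2 = \alpha^2+1$ ensures $\alpha\cos 2\theta + \beta > 0$ and $\sin 2\theta > 0$ on $(0,\pi/2)$, the maximum over Alice's measurements is $(\alpha\cos 2\theta + \beta) + \sin 2\theta$, attained \emph{iff} $a_{0z} = a_{1x} = 1$, which forces $A_0 = \sigma_z$ and $A_1 = \sigma_x$ (the unit-norm constraint then requires all other Bloch components to vanish).

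Third, a one-variable optimisation of $(\alpha\cos 2\theta + \beta) + \sin 2\theta$ over $\theta\in(0,\pi/2)$ has a unique critical point at $\tan 2\theta = 1/\alpha$, hence $\sin 2\theta = 1/\sqrt{1+\alpha^2}$, which under $\beta^2 = \alpha^2+1$ simplifies to $\sin 2\theta = 1/\beta$; the corresponding value $\sqrt{1+\alpha^2} + \beta = 2\beta$ matches the quantum bound. Strict concavity of the objective at this extremum (the second derivative equals $-4\sqrt{1+\alpha^2}<0$) pins the Schmidt angle down uniquely. The main obstacle I anticipate is the pure-state reduction via steerable weight, since it leans on the external monotonicity result of \cite{goswami2018one}; once that reduction is in place, the remainder of the argument is an elementary but tight optimisation that simultaneously fixes both $\theta$ and Alice's measurement directions, modulo the local-unitary freedom already absorbed into the Schmidt decomposition.
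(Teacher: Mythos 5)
Your overall strategy (steerable-weight reduction to pure two-qubit states, then an explicit optimisation that simultaneously pins down $\theta$ and Alice's measurements) mirrors the paper's, and the computations in your steps 2 and 3 are correct as far as they go. However, there is a genuine gap at the start of step 2: you fix Bob's observables along $Z$ and $X$ \emph{and} write the state in its Schmidt basis, implicitly assuming these two bases coincide. On Bob's trusted side this cannot be arranged ``up to local unitary'': his measurement axes are fixed in his lab frame, so rotating his half of the Schmidt basis would simultaneously rotate $Z$ and $X$ into some other pair of orthogonal spin directions. Your correlators $\langle A_0 Z\rangle = a_{0z}$ and $\langle A_1 X\rangle = a_{1x}\sin 2\theta$ are valid only when the Schmidt basis on Bob's side equals the $\sigma_z$ eigenbasis, so your optimisation rules out \emph{aligned} states with the wrong $\theta$, but it does not exclude a state whose Schmidt basis is tilted relative to Bob's measurement axes (possibly with a different Schmidt angle) from also reaching $2\beta$ with suitably chosen Alice measurements.

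The paper closes exactly this loophole in Appendix~\ref{quantum_bound_TSI}: it writes the two-party correlators through the correlation matrix $T$, parameterises Bob's two orthogonal measurement directions in the Schmidt frame by an angle $\mu$, bounds $I_\beta \le \beta\lvert T\vec{b_0}\rvert + \lvert T\vec{b_1}\rvert$, and shows that attaining the global maximum $\sqrt{2(1+\alpha^2+\beta^2)}$ together with the condition $A_0=\sigma_z$ forces $\cos\mu=0$, i.e.\ alignment of the Schmidt basis with Bob's $Z$/$X$ axes, in addition to $\sin 2\theta = 1/\beta$ and $\beta^2=\alpha^2+1$. This is precisely why the paper remarks that, once the local bases are fixed to the Schmidt bases, Bob's optimal measurements cannot a priori be written as $Z$ and $X$. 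To repair your argument you would need to reinstate this extra optimisation over the relative rotation (equivalently, allow $\vec{b_0},\vec{b_1}$ to be arbitrary perpendicular unit vectors in the Schmidt frame) and verify that any misalignment strictly decreases the attainable value; the rest of your proposal, including the pure-state reduction via steerable weight and the uniqueness of the critical point at $\tan 2\theta = 1/\alpha$, matches the paper's reasoning.
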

\begin{proof}
We are given that the joint state generates an assemblage that violates the tilted steering inequality maximally, therefore the joint state is steerable and hence entangled. Using the arguments from the previous section, we can eliminate mixed entangled states implying that the joint state is a pure entangled state. Also, we have shown that the maximal violation occurs for the state $|\psi(\theta)\rangle$ only if Alice and Bob make spin measurements along  $Z$ and $X$ and for given values of $\beta$ and $\alpha$ satisfying $\beta^2 = \alpha^2 + 1$ there is a unique $\theta$ in $(0,\pi/4)$ given by $\sin2\theta=\frac{1}{\beta}$. Further, any pure entangled state can be expressed as $|\psi(\theta)\rangle$ using schmidt decomposition and for this form of the state the only measurements that achieve the maximal violation are the ones shown above; hence we can conclude that no other locally rotated state could achieve the maximal violation with the measurements defined above in the schmidt bases.
\end{proof}

\begin{result}
\label{Result_1_2qubit}
Maximal violation of the Tilted Steering inequality certifies any 2-qubit pure entangled state i.e. for the inequality $\alpha\langle A_0\rangle+\beta\langle A_0Z\rangle+\langle A_1X\rangle\leq\alpha+\sqrt{1+\beta^2}$ with $\beta^2 = \alpha^2 + 1$, let the maximal violation be achieved by an assemblage generated by performing measurements on the joint pure state $|\psi\rangle_{AB}\in\mathcal{H}_A\otimes\mathcal{H}_B$ (where the dimensions of the trusted side B is 2) and measurement operators $\{M_{a|x}\}_{a,x}$ on Alice's side. Then there exists an isometry on Alice's side $\Phi:\mathcal{H}_A\to \mathcal{H}_A\otimes\mathcal{H}_{A'}$, where the dimension of $\mathcal{H}_{A'}$ is 2, such that,
\begin{equation}
    \begin{aligned}
        \Phi(|\psi\rangle_{AB})&=|junk\rangle_A\otimes|\psi(\theta)\rangle_{A'B},\\
        \Phi(M_{a|x}\otimes I|\psi\rangle_{AB})&=|junk\rangle_A\otimes (\tilde{M}_{a|x}\otimes I)|\psi(\theta)\rangle_{A'B}
    \end{aligned}
\end{equation}
where $|\psi(\theta)\rangle_{A'B}$ is given by \eqref{qubit_state} and measurement operators $\tilde{M}_{a|x}$ correspond to observables $A_0=\sigma_z$ and $A_1=\sigma_x$.
\end{result}
\begin{proof}
Similar result for state certification from the maximal violation of fine grained steering inequality \cite{fine_grained_ineq} was proved in \cite{goswami2018one}. According to Jordan's lemma (see \cite{masanes} for proof), if there are operators $A_0$ and $A_1$ with eigenvalues $\pm1$ acting on a Hilbert space $\mathcal{H}$  of dimension $d$, then there is a decomposition of $\mathcal{H}$ as a direct sum of subspaces $\mathcal{H}_i$ with dimension $d\leq2$ each. That means $A_0$ and $A_1$ act within each subspace $\mathcal{H}_i$ and can be written as $A_0=\oplus_i A_0^i$ and $A_1=\oplus_i A_1^i$.\\
In our steering case, operators on the untrusted side (Alice) act on a Hilbert space $\mathcal{H}_A$ with dimension $d$. From Jordan's lemma, we can say that these observables act on subspace $\mathcal{H}_A^i$ with dimension $\leq 2$ i.e.,
\begin{equation}
    \mathcal{H}_A\otimes\mathcal{H}_B=\oplus_i(\mathcal{H}_A^i\otimes\mathcal{H}_B)
\end{equation}
Consider that $A_x=\Pi_{0|A_x}-\Pi_{1|A_x}$ with $x\in \{0,1\}$, where $\Pi_{a|A_x}$ is the projector acting on $\mathcal{H}_A$. Hence, we can write $\Pi_{a|A_x}=\oplus_i \Pi^{i}_{a|A_x}$ such that each $\Pi^{i}_{a|A_x}$ acts on $\mathcal{H}_A^{i}$ for all $a$ and $x$; we also denote as $\Pi^i = \Pi^{i}_{0|A_x} + \Pi^{i}_{1|A_x}$ the projector on $\mathcal{H}_A^{i}$. Also, we denote the projector associated with outcome $b$ of measurement $y$ as $\Pi_{b|B_y}$ acting on $\mathcal{H}_B$ of dimension 2.\\
Now, for any state $\rho_{AB}\in \mathcal{B}(\mathcal{H}_A \otimes \mathcal{H}_B)$ we have;
\begin{equation}
\label{subspace_prob}
    \begin{aligned}
        P(a,b|x,y) &= Tr[\rho_{AB}(\Pi_{a|A_x} \otimes \Pi_{b|B_y})] \\
        &= \sum_{i}{q_i Tr[\rho_{AB}^{i}(\Pi^{i}_{a|A_x} \otimes \Pi_{b|B_y})]} \\
        &= \sum_i{q_i P_i(a,b|x,y)}
    \end{aligned}
\end{equation}
where $q_i=Tr[\rho_{AB}(\Pi^{i} \otimes I)] \geq 0 \, \forall \, i; \sum_i{q_i}=1$,  $\rho^i_{AB} = \frac{(\Pi^i\otimes I)\rho_{AB}(\Pi^i\otimes I)}{q_i} \in \mathcal{B}(\mathcal{H}_A \otimes \mathcal{H}_B)$ is, at most, a 2-qubit state and $P_i(a,b|x,y)$ is the joint probability in the 2-qubit subspace of Alice. \\
Using \eqref{subspace_prob}, we can write the following correlation terms as:
\begin{equation}
    \begin{aligned}
        \langle A_0 Z\rangle &= \sum_i{q_i\langle A_0 Z\rangle_i} \\
        \langle A_1 X\rangle &= \sum_i{q_i\langle A_1 X\rangle_i} \\
        \langle A_0 \rangle &= \sum_i{q_i\langle A_0 \rangle_i}
    \end{aligned}
\end{equation}
and as a consequence we can write the Tilted Steering expression as:
\begin{equation}
\label{TSI_subspace}
\begin{aligned}
    I_{\alpha,\beta} &\equiv \alpha\langle A_0\rangle + \beta\langle A_0Z\rangle + \langle A_1X\rangle \\
    &= \sum_{i}{q_i(\alpha\langle A_0\rangle_i + \beta\langle A_0Z\rangle_i + \langle A_1X\rangle_i)} \\
    &= \sum_{i}{q_i I_{\alpha,\beta}^i}
\end{aligned}
\end{equation}
Note that for the maximum value of $I_{\alpha,\beta}$ to be obtained it is necessary that, for all $i$ such that $q_i\neq 0$, the dimension of each $\mathcal{H}_A^i$ be equal to 2. This implies that the effective dimension $d$ of the local Hilbert space of Alice $\mathcal{H}_A$ is even. Further, from \eqref{TSI_subspace} it follows that $I_{\alpha,\beta}$ is a convex sum over $I_{\alpha,\beta}^i$ and the maximal violation of $I_{\alpha,\beta}$ is obtained if and only if $I_{\alpha,\beta}^{i}$ is maximal ($=\sqrt{2(1+\beta^2+\alpha^2)}$). From \autoref{lem1_all}, the maximal quantum value of $I_{\alpha,\beta}^i$ is achieved if and only if the shared 2-qubit state $\rho_{AB}^i$ is a pure entangled state, i.e., $\rho_{AB}^i=|\psi^i(\theta)\rangle \langle\psi^i(\theta)|$ where:
\begin{equation}
\label{subspace_state}
    |\psi^i(\theta)\rangle = cos\theta|2i,0\rangle + sin\theta |2i+1,1\rangle
\end{equation}
with $sin2\theta=\frac{1}{\beta}=\frac{1}{\sqrt{1+\alpha^2}}$ and projectors given as $\Pi^i_{0|A_0} = |2i\rangle \langle2i|$, $\Pi^i_{1|A_0} = |2i+1\rangle \langle2i+1|$, $\Pi^i_{0|A_1} = |+_i\rangle \langle+_i|$ and $\Pi^i_{1|A_1} = |-_i\rangle \langle-_i|$; $|\pm_i\rangle = \frac{|2i\rangle \pm |2i+1\rangle}{\sqrt{2}}$. Here, the state parameter $\theta$ is uniquely determined by $\beta$ (or $\alpha$) specified by the particular TSI being used for the certification task. \\
Hence, the state $|\psi\rangle_{AB} \in \mathcal{H}_A\otimes\mathcal{H}_B$ and measurement projection operators $\{M_{a|x}\}_{a,x}$ give the maximal violation if and only if they can be decomposed as:
\begin{equation}
    |\psi\rangle_{AB}=\oplus_i\sqrt{q_i}|\psi^i(\theta)\rangle; \quad M_{a|x}=\oplus_i \Pi_{a|x}^i
\end{equation} 
where $\sum_i q_i=1$ and $|\psi^i(\theta)\rangle$ is given by \eqref{subspace_state}.

Then one can append an ancilla qubit on Alice's side prepared in state $|0\rangle_{A'}$ and define a local isometry as the map:
\begin{equation}
\begin{aligned}
    \Phi|\oplus_i 2i,0\rangle_{AA'}&\longmapsto|\oplus_i 2i,0\rangle_{AA'},\\
    \Phi|\oplus_i 2i+1,0\rangle_{AA'}&\longmapsto|\oplus_i 2i,1\rangle_{AA'}.
\end{aligned}
\end{equation}
which can be used as:
\begin{equation}
    (\Phi\otimes I)|\psi\rangle_{AB}|0\rangle_{A'} = |junk\rangle_A\otimes|\psi(\theta)\rangle_{A'B}
\end{equation}
to extract the target state $|\psi(\theta)\rangle_{A'B}$ \eqref{qubit_state} from $|\psi\rangle_{AB}$.
\end{proof}

\section{Proof of SDI certification for arbitrary bipartite pure entangled states}
\label{SDI_proof_ST}
Let us first describe the structure of the ideal assemblage that we would look for on Bob's side, then we will proceed to show how such a structure of the assemblage can be used for certifying the target state \eqref{qudit_state}. 
\subsection{Structure of the certifying assemblage}
\label{assemblage_structure}
In order to obtain the complete set of ideal assemblages that meet the sufficient condition for SDI certification, we need 3 measurement settings on Alice's side with $d$ outcomes each. The assemblage set generated from Alice's measurement settings $x\in \{0,1\}$ will be used to certify subspace projections of the form $c_{2m}|2m,2m\rangle+c_{2m+1}|2m+1,2m+1\rangle$ and those generated from measurement settings $x\in \{0,2\}$ will certify projections of the form $c_{2m+1}|2m+1,2m+1\rangle+c_{2m+2}|2m+2,2m+2\rangle$ of the target state.\\

\noindent {\em Assemblage obtained from measurement $x=0$: } 
For this case, we impose a condition that the assemblages corresponding to different outcomes only have diagonal non-zero terms;
\begin{equation}
    \sigma_{i|0}=c_i^2|i\rangle \langle i|\, \forall\, i\in\{0,1,\ldots,d-1 \}
\end{equation}\\

\noindent {\em Assemblage obtained from measurement $x=1$: } 
In this case, we require the assemblage to have certain non-zero non-diagonal terms along with diagonal terms. Specifically, the assemblages should have non-zero outer product between eigenvectors $|2m\rangle$ and $|2m+1\rangle$ in the computational basis;
\begin{table}[h!]
\caption{Non-zero elements of $\sigma_{2m|1}\mbox{ and }\sigma_{2m+1|1}\, \forall\, m\in\{0,\ldots,\frac{d}{2}-1\}$ for even $d\geq 2$}
\label{table1}
 \begin{tabular}{|c||c|c|} 
 \hline
 $\sigma_{2m|1}$ & $\langle2m|$ & $\langle 2m+1|$ \\ [0.5ex] 
 \hline\hline
 $|2m\rangle$ & $\frac{c_{2m}^2}{2}$ & $\frac{c_{2m}c_{2m+1}}{2}$  \\ 
 \hline
 $|2m+1\rangle$ & $\frac{c_{2m}c_{2m+1}}{2}$  & $\frac{c_{2m+1}^2}{2}$ \\ [1ex] 
 \hline
\end{tabular}
\quad\quad
 \begin{tabular}{|c||c|c|} 
 \hline
 $\sigma_{2m+1|1}$ & $\langle2m|$ & $\langle 2m+1|$ \\ [0.5ex] 
 \hline\hline
 $|2m\rangle$ & $\frac{c_{2m}^2}{2}$ & $-\frac{c_{2m}c_{2m+1}}{2}$  \\ 
 \hline
 $|2m+1\rangle$ & $-\frac{c_{2m}c_{2m+1}}{2}$  & $\frac{c_{2m+1}^2}{2}$ \\ [1ex]
 \hline
\end{tabular}
\end{table}

\begin{table}[h!]
\caption{Non-zero elements of $\sigma_{2m|1}\mbox{ and }\sigma_{2m+1|1}\, \forall\, m\in\{0,\ldots,\frac{d-1}{2}-1\}$ for odd $d\geq 3$}
\label{table2}
 \begin{tabular}{|c||c|c|} 
 \hline
 $\sigma_{2m|1}$ & $\langle2m|$ & $\langle 2m+1|$ \\ [0.5ex] 
 \hline\hline
 $|2m\rangle$ & $\frac{c_{2m}^2}{2}$ & $\frac{c_{2m}c_{2m+1}}{2}$  \\ 
 \hline
 $|2m+1\rangle$ & $\frac{c_{2m}c_{2m+1}}{2}$  & $\frac{c_{2m+1}^2}{2}$ \\ [1ex] 
 \hline
\end{tabular}
\quad
 \begin{tabular}{|c||c|c|} 
 \hline
 $\sigma_{2m+1|1}$ & $\langle2m|$ & $\langle 2m+1|$ \\ [0.5ex] 
 \hline\hline
 $|2m\rangle$ & $\frac{c_{2m}^2}{2}$ & $-\frac{c_{2m}c_{2m+1}}{2}$  \\ 
 \hline
 $|2m+1\rangle$ & $-\frac{c_{2m}c_{2m+1}}{2}$  & $\frac{c_{2m+1}^2}{2}$ \\ [1ex]  
 \hline
\end{tabular}
\quad
$\sigma_{d-1|1}=c_{d-1}^2|d-1\rangle\langle d-1|$
\end{table}

\noindent {\em Assemblage obtained from measurement $x=2$: } 
As in the previous case, this measurement will also give rise to an assemblage with non-zero non-diagonal elements but here they will be associated to the outer products between eigenvectors $|2m+1 \mbox{ mod }d\rangle$ and $|2m+2 \mbox{ mod }d\rangle$ in the computational basis;

\begin{table}[h!]
\caption{Non-zero elements of $\sigma_{2m+1|2}\mbox{ and }\sigma_{2m+2|2}\, \forall\, m\in\{0,\ldots,\frac{d}{2}-1\}$ for even $d\geq 2$}
\label{table3}
 \begin{tabular}{|c||c|c|} 
 \hline
 $\sigma_{2m+1|2}$ & $\langle2m+1|$ & $\langle 2m+2|$ \\ [0.5ex] 
 \hline\hline
 $|2m+1\rangle$ & $\frac{c_{2m+1}^2}{2}$ & $\frac{c_{2m+1}c_{2m+2}}{2}$  \\ 
 \hline
 $|2m+2\rangle$ & $\frac{c_{2m+1}c_{2m+2}}{2}$  & $\frac{c_{2m+2}^2}{2}$ \\ [1ex] 
 \hline
\end{tabular}
\quad\quad
 \begin{tabular}{|c||c|c|} 
 \hline
 $\sigma_{2m+2|2}$ & $\langle2m+1|$ & $\langle 2m+2|$ \\ [0.5ex] 
 \hline\hline
 $|2m+1\rangle$ & $\frac{c_{2m+1}^2}{2}$ & $-\frac{c_{2m+1}c_{2m+2}}{2}$  \\ 
 \hline
 $|2m+2\rangle$ & $-\frac{c_{2m+1}c_{2m+2}}{2}$  & $\frac{c_{2m+2}^2}{2}$ \\ [1ex] 
 \hline
\end{tabular}
\end{table}

\begin{table}[h!]
\caption{Non-zero elements of $\sigma_{2m+1|2}\mbox{ and }\sigma_{2m+2|2}\, \forall\, m\in\{0,\ldots,\frac{d-1}{2}-1\}$ for odd $d\geq 3$}
\label{table4}
$\sigma_{0|2}=c_{0}^2|0\rangle\langle 0|$
\quad
 \begin{tabular}{|c||c|c|} 
 \hline
 $\sigma_{2m+1|2}$ & $\langle2m+1|$ & $\langle 2m+2|$ \\ [0.5ex] 
 \hline\hline
 $|2m+1\rangle$ & $\frac{c_{2m+1}^2}{2}$ & $\frac{c_{2m+1}c_{2m+2}}{2}$  \\ 
 \hline
 $|2m+2\rangle$ & $\frac{c_{2m+1}c_{2m+2}}{2}$  & $\frac{c_{2m+2}^2}{2}$ \\ [1ex] 
 \hline
\end{tabular}
\quad
 \begin{tabular}{|c||c|c|} 
 \hline
 $\sigma_{2m+2|2}$ & $\langle2m+1|$ & $\langle 2m+2|$ \\ [0.5ex] 
 \hline\hline
 $|2m+1\rangle$ & $\frac{c_{2m+1}^2}{2}$ & $-\frac{c_{2m+1}c_{2m+2}}{2}$  \\ 
 \hline
 $|2m+2\rangle$ & $-\frac{c_{2m+1}c_{2m+2}}{2}$  & $\frac{c_{2m+2}^2}{2}$ \\ [1ex] 
 \hline
\end{tabular}
\end{table}

\subsection{Ideal measurements on Alice's side}
\label{ideal_measurements}
We will now mention the ideal measurements on Alice's side that give rise to the required assemblages. For $x=0$: measurement is done in the computational basis i.e. in the basis $\{|0\rangle,|1\rangle,\ldots,|d-1\rangle\}$; for $x=1$: for even $d$, measurement is done in the eigenbasis of observable $\oplus_{m=0}^{\frac{d}{2}-1}[\sigma_x]_m$ and for odd $d$, measurement is done in the eigenbasis of the observable $\oplus_{m=0}^{\frac{d-1}{2}-1}[\sigma_x]_m\oplus|d-1\rangle\langle d-1|$ where $[\sigma_x]_m=|2m\rangle\langle2m+1|+|2m+1\rangle\langle2m|$ is defined with respect to the basis $\{|2m \mbox{ mod }d\rangle,|2m+1 \mbox{ mod }d\rangle\}$; finally for $x=2$: for even $d$, measurement is done in the eigenbasis of observable $\oplus_{m=0}^{\frac{d}{2}-1}[\sigma_x]_m'$ and for odd $d$, measurement is done in the eigenbasis of the observable $|0\rangle\langle 0|\oplus_{m=0}^{\frac{d-1}{2}-1}[\sigma_x]_m'$ where $[\sigma_x]_m'=|2m+1\rangle\langle2m+2|+|2m+2\rangle\langle2m+1|$ is defined with respect to the basis $\{|2m+1 \mbox{ mod }d\rangle,|2m+2 \mbox{ mod }d\rangle\}$.

\subsection{Obtaining the sufficient conditions for SDI certification}
\label{unitaries_def}
Analogous to the DI case \cite{yang_navascues, coladangelo2017all}, we will show that the ideal assemblage structure can be used to obtain the condition given in \eqref{conditions} which is sufficient to show the existence of an isometry (Fig. \ref{Alice_isometry}) that certifies the target state. For this purpose we will construct projections $P_A^{(k)}$ and the unitaries $X_A^{(k)}$ using the measurement projection operators on Alice's side that give rise to the ideal assemblages. Let us denote Alice's projection operators for measurement $x$ and outcome $i$ by $\Pi_i^{A_x}$ and explore the consequences of imposing the ideal structure on the generated assemblage as shown in \ref{assemblage_structure}.\\
Define the operator $A_{x,m}=\Pi_{2m}^{A_x}-\Pi_{2m+1}^{A_x}$ for $x\in\{0,1\}$; it can be seen that $(A_{x,m})^2=\Pi_{2m}^{A_x}+\Pi_{2m+1}^{A_x}:=\mathds{1}_m^{A_x}$. Further note that $\norm{\Pi_{i}^{A_0}|\psi\rangle}=\sqrt{\langle\psi|\Pi_{i}^{A_0}|\psi\rangle}=\sqrt{tr(\sigma_{i|A_0})}=c_i$, similarly one can calculate $\norm{\mathds{1}_m^{A_x}|\psi\rangle}=\sqrt{c_{2m}^2+c_{2m+1}^2}\quad\forall x\in\{0,1\}$. Also, define Pauli X, Z and identity operators on Bob's side for the basis $|2m \mbox{ mod } d\rangle, |2m+1 \mbox{ mod } d\rangle$ as $Z_{B,m}=|2m\rangle\langle2m|-|2m+1\rangle\langle2m+1|$, $X_{B,m}=|2m\rangle\langle2m+1|+|2m+1\rangle\langle2m|$ and $\mathds{1}_{B,m}=|2m\rangle\langle2m|+|2m+1\rangle\langle2m+1|$.\\
Due to the structure imposed on the assemblages, we see that:
\begin{equation}
    \begin{aligned}
        I_{\alpha_m,\beta_m}&\equiv \alpha_m\langle\psi|A_{0,m}\mathds{1}_{B,m}|\psi\rangle+\beta_m\langle\psi|A_{0,m}Z_{B,m}|\psi\rangle+\langle\psi| A_{1,m}X_{B,m}|\psi\rangle, \mbox{ with } \beta_m^2=\alpha_m^2+1\\
        &=\alpha_m tr(\sigma_{2m|0}-\sigma_{2m+1|0})+\beta_m tr((\sigma_{2m|0}-\sigma_{2m+1|0})Z_{B,m})+tr((\sigma_{2m|1}-\sigma_{2m+1|1})X_{B,m})\\
        &=\alpha_m (c_{2m}^2-c_{2m+1}^2) + \beta_m(c_{2m}^2+c_{2m+1}^2)+2c_{2m}c_{2m+1}\\
        &=(\beta_m+\sqrt{1+\alpha_m^2})(c_{2m}^2+c_{2m+1}^2)\\
        &=2\beta_m(c_{2m}^2+c_{2m+1}^2); \mbox{ note that } \sqrt{2(1+\alpha_m^2+\beta_m^2)}=2\beta_m
    \end{aligned}
\end{equation}
where the state parameter $\theta_m$ is such that $\sin2\theta_m=\frac{1}{\beta_m}=\frac{2c_{2m}c_{2m+1}}{c_{2m}^2+c_{2m+1}^2}$. Although this is not the maximal violation of the Tilted Steering Inequality but we can use a trick by defining the normalised state $|\psi_m\rangle=\frac{\mathds{1}_m^{A_0}|\psi\rangle}{\sqrt{c_{2m}^2+c_{2m+1}^2}}$; consequently the assemblage modifies as $\sigma_{i|x,m}=\frac{\sigma_{i|x}}{c_{2m}^2+c_{2m+1}^2}$. As a result, the tilted steering expression evaluates to:
\begin{equation}
\label{max_violation_m}
    \begin{aligned}
        I_{\alpha_m,\beta_m}&\equiv \alpha_m\langle\psi_m|A_{0,m}\mathds{1}_{B,m}|\psi_m\rangle+\beta_m\langle\psi_m|A_{0,m}Z_{B,m}|\psi_m\rangle+\langle\psi_m| A_{1,m}X_{B,m}|\psi_m\rangle\\
        &=\alpha_m tr(\sigma_{2m|0,m}-\sigma_{2m+1|0,m})+\beta_m tr((\sigma_{2m|0,m}-\sigma_{2m+1|0,m})Z_{B,m})+tr((\sigma_{2m|1,m}-\sigma_{2m+1|1,m})X_{B,m})\\
        &=\beta_m+\sqrt{1+\alpha_m^2}=2\beta_m
    \end{aligned}
\end{equation}
which is the maximal violation of the inequality given by $I_{\alpha_m,\beta_m}$. Then, by \autoref{lem1_all}, we can deduce that:
\begin{equation}
\label{maximal_state}
    |\psi_m\rangle=\frac{c_{2m}|2m,2m\rangle+c_{2m+1}|2m+1,2m+1\rangle}{\sqrt{c_{2m}^2+c_{2m+1}^2}} \implies \mathds{1}_m^{A_0}|\psi\rangle=c_{2m}|2m,2m\rangle+c_{2m+1}|2m+1,2m+1\rangle
\end{equation}
\begin{equation}
\begin{aligned}
\label{maximal_measurements}
    A_{0,m}&=|2m\rangle\langle2m|-|2m+1\rangle\langle2m+1| \\
    A_{1,m}&=|2m\rangle\langle2m+1|+|2m+1\rangle\langle2m|
    \end{aligned}
\end{equation}

Now that we have the maximal violation, we need to define the \textit{unitarized} versions of our operators as $A_{x,m}^u:=\mathds{1}-\mathds{1}_m^{A_x}+A_{x,m}$ and let $X_{A,m}^u:=A_{1,m}^u$, this unitarization is necessary to achieve the conditions of \autoref{sufficient_lemma}. As can be checked, the maximal violation in \eqref{max_violation_m} also holds with unitarized operators. 

Next let us define the projections $P_A^{(2m)}:=(\mathds{1}_m^{A_0}+A_{0,m})/2=\Pi_{2m}^{A_0}$ and $P_A^{(2m+1)}:=(\mathds{1}_m^{A_0}-A_{0,m})/2=\Pi_{2m+1}^{A_0}$. So, for all $m$ and $k=2m,2m+1$:
\begin{equation}
\begin{aligned}
    P_A^{(k)}|\psi\rangle&=(\mathds{1}_m^{A_0}+(-1)^k A_{0,m})/2|\psi\rangle \\
    &= (\mathds{1}_m^{A_0}|\psi\rangle+(-1)^k A_{0,m}|\psi\rangle)/2 \\
    &= (\mathds{1}_m^{A_0}|\psi\rangle+(-1)^k A_{0,m}\mathds{1}_m^{A_0}|\psi\rangle)/2 \mbox{ since } A_{0,m}=A_{0,m}\mathds{1}_m^{A_0} \\
    &= (c_{2m}|2m,2m\rangle+c_{2m+1}|2m+1,2m+1\rangle +(-1)^k(c_{2m}|2m,2m\rangle-c_{2m+1}|2m+1,2m+1\rangle))/2\\
    &= c_k|k,k\rangle
\end{aligned}
\end{equation}
Further, see that:
\begin{equation}
\begin{aligned}
    X_{A,m}^u P_A^{(2m+1)}|\psi\rangle &= X_{A,m}^u c_{2m+1}|2m+1,2m+1\rangle \\
    &= A_{1,m}c_{2m+1}|2m+1,2m+1\rangle \\
    &= c_{2m+1}|2m,2m+1\rangle
\end{aligned}
\end{equation}

Similarly, we can perform analogous calculations for the operators: $A'_{0,m}=\Pi_{2m+1}^{A_0}-\Pi_{2m+2}^{A_0}$ and $A'_{1,m}=\Pi_{2m+1}^{A_2}-\Pi_{2m+2}^{A_2}$ for $x\in\{0,2\}$. Following the same procedure as shown above, we define the unitary operators $A'^{u}_{x,m}:=\mathds{1}-\mathds{1}_m^{A'_x}+A'_{x,m}$ and let $Y_{A,m}^u:=A'^u_{1,m}$; further we can obtain:
\begin{gather*}
    P_A^{(k)}|\psi\rangle= c_k|k,k\rangle \mbox{ for all }\, k=2m+1,2m+2 \\
    Y_{A,m}^u = \mathds{1}-\mathds{1}_m^{A'_1}+(|2m+1\rangle\langle2m+2|+|2m+2\rangle\langle2m+1|) \\
    Y_{A,m}^u P_A^{(2m+2)}|\psi\rangle = c_{2m+2}|2m+1,2m+2\rangle
\end{gather*}

As the final step we construct the unitaries $X_A^{(k)}$ required in \autoref{sufficient_lemma} from $X_{A,m}^u$ and $Y_{A,m}^u$ as:
\begin{equation}
\label{big_unitary}
    X_A^{(k)}=
    \begin{cases}
    \mathds{1}, \mbox{ for } k=0 \\
    X_{A,0}^u Y_{A,0}^u X_{A,1}^u Y_{A,1}^u \ldots X_{A,m-1}^u Y_{A,m-1}^u X_{A,m}^u, \mbox{ for } k=2m+1 \\
    X_{A,0}^u Y_{A,0}^u X_{A,1}^u Y_{A,1}^u \ldots X_{A,m-1}^u Y_{A,m-1}^u, \mbox{ for } k=2m \\
    \end{cases}
\end{equation}
Note that $X_A^{(k)}$ is unitary since it is composed of unitaries. Let us now verify the condition required in \autoref{sufficient_lemma}, i.e.;
\begin{equation}
\label{sufficient_condition}
    X_A^{(k)}P_A^{(k)}|\psi\rangle=c_k|0,k\rangle
\end{equation}
for different cases.\\
For case $k=0$:
\begin{equation}
\begin{aligned}
    X_A^{(0)}P_A^{(0)}|\psi\rangle &= \mathds{1}P_A^{(0)}|\psi\rangle \\
    &= c_0|0,0\rangle
\end{aligned}
\end{equation}
For case $k=2m+1$:
\begin{equation}
\begin{aligned}
    X_A^{(2m+1)}P_A^{(2m+1)}|\psi\rangle &= X_{A,0}^u Y_{A,0}^u X_{A,1}^u Y_{A,1}^u \ldots X_{A,m-1}^u Y_{A,m-1}^u X_{A,m}^u P_A^{(2m+1)}|\psi\rangle \\
    &= X_{A,0}^u Y_{A,0}^u X_{A,1}^u Y_{A,1}^u \ldots X_{A,m-1}^u Y_{A,m-1}^u c_{2m+1}|2m,2m+1\rangle \\
    &= X_{A,0}^u Y_{A,0}^u X_{A,1}^u Y_{A,1}^u \ldots X_{A,m-1}^u c_{2m+1}|2m-1,2m+1\rangle\\
    &= X_{A,0}^u Y_{A,0}^u X_{A,1}^u Y_{A,1}^u \ldots c_{2m+1}|2m-2,2m+1\rangle \\
    &= \ldots \\
    &= c_{2m+1}|0,2m+1\rangle
\end{aligned}
\end{equation}
For the final case $k=2m$:
\begin{equation}
\begin{aligned}
    X_A^{(2m)}P_A^{(2m)}|\psi\rangle &= X_{A,0}^u Y_{A,0}^u X_{A,1}^u Y_{A,1}^u \ldots X_{A,m-1}^u Y_{A,m-1}^u P_A^{(2m)}|\psi\rangle \\
    &= X_{A,0}^u Y_{A,0}^u X_{A,1}^u Y_{A,1}^u \ldots X_{A,m-1}^u Y_{A,m-1}^u c_{2m}|2m,2m\rangle \\
    &= \ldots \\
    &= c_{2m}|0,2m\rangle
\end{aligned}
\end{equation}

\subsection{From the sufficient conditions to SDI state certification}
\label{conditions}
In the previous subsection we verified that the condition \eqref{sufficient_condition} holds, in the following lemma we claim that the condition is sufficient for showing the existence of an isometry on Alice's side which can be used to extract out the target state \eqref{qudit_state} from the joint state between Alice and Bob.
\begin{lemma}
\label{sufficient_lemma}
Suppose there exist unitary operator $X_A^{(k)}$ and projections $\{P_A^{(k)}\}_{k=0,\ldots,d-1}$ which form a complete orthogonal set and they satisfy the following condition:
\begin{equation}
\label{lem_condition}
    X_A^{(k)}P_A^{(k)}|\psi\rangle=c_k|0,k\rangle \quad \forall k
\end{equation}
then there exists a local isometry on Alice's side $\Phi$ such that $\Phi(|\psi\rangle)=|junk\rangle\otimes|\psi_{target}\rangle$, for some auxiliary state $|junk\rangle$.
\end{lemma}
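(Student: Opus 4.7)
The plan is to take the explicit circuit drawn in Fig.~\ref{Alice_isometry} as the candidate isometry $\Phi$ and verify by direct computation that, under the hypothesis \eqref{lem_condition}, it extracts $|\psi_{target}\rangle$ on registers $A'B$ while leaving a fixed pure state on $A$. Concretely, $\Phi$ appends an ancilla $|0\rangle_{A'}$ on Alice's side and then applies in sequence the quantum Fourier transform $F$ on $A'$, the controlled gate $S_{AA'}$ (which implements $Z_A^k$ on $A$ in the branch $|k\rangle_{A'}$), the inverse Fourier transform $\bar F$ on $A'$, and finally $R_{AA'}$ (which applies $X_A^{(k)}$ on $A$ in the branch $|k\rangle_{A'}$). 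That $\Phi$ is a genuine isometry from $\mathcal H_A$ into $\mathcal H_A\otimes \mathcal H_{A'}$ is automatic because it is a composition of the isometric embedding $|\psi\rangle_{AB}\mapsto |\psi\rangle_{AB}|0\rangle_{A'}$ with the unitaries $F$, $S_{AA'}$, $\bar F$, and $R_{AA'}$; in particular, completeness and orthogonality of $\{P_A^{(k)}\}$ make $Z_A=\sum_k \omega^k P_A^{(k)}$ unitary, so $S_{AA'}$ is well defined.

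First I would apply $F$ to the ancilla to get $\tfrac{1}{\sqrt d}\sum_{k}|\psi\rangle_{AB}|k\rangle_{A'}$. Applying $S_{AA'}$ next inserts $Z_A^k$ in each branch; expanding $Z_A^k=\sum_j \omega^{jk}P_A^{(j)}$ yields $\tfrac{1}{\sqrt d}\sum_{j,k}\omega^{jk}P_A^{(j)}|\psi\rangle_{AB}|k\rangle_{A'}$. Applying $\bar F$ to the $A'$ register then telescopes the sum via the orthogonality relation $\tfrac{1}{d}\sum_{k}\omega^{k(j-l)}=\delta_{jl}$, giving the clean form $\sum_{k}P_A^{(k)}|\psi\rangle_{AB}|k\rangle_{A'}$. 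Finally $R_{AA'}$ applies $X_A^{(k)}$ conditioned on $|k\rangle_{A'}$, producing $\sum_k X_A^{(k)}P_A^{(k)}|\psi\rangle_{AB}|k\rangle_{A'}$. Invoking the hypothesis \eqref{lem_condition} collapses each branch to $c_k |0,k\rangle_{AB}|k\rangle_{A'}$, which regroups as $|0\rangle_A\otimes \bigl(\sum_k c_k |k\rangle_B |k\rangle_{A'}\bigr) = |0\rangle_A\otimes |\psi_{target}\rangle_{BA'}$, so one may take $|junk\rangle_A:=|0\rangle_A$.

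The main obstacle is really bookkeeping rather than a new idea: this is the standard ``SWAP gadget'' of Yang--Navascu\'es and Coladangelo et al.\ transplanted to the SDI setting, and the hypothesis \eqref{lem_condition} is precisely tailored so that the last step collapses correctly. The one subtlety worth flagging is that unitarity of $Z_A$ and of each $X_A^{(k)}$ is essential to commute the gates past each other and to preserve inner products; this is why the earlier subsection performs the ``unitarization'' $A_{x,m}^u := \mathds{1}-\mathds{1}_m^{A_x}+A_{x,m}$ before feeding the operators into this lemma. The analogous identity for measurements in \eqref{isometry} follows by running exactly the same four-gate calculation with $M_{a|x}|\psi\rangle$ in place of $|\psi\rangle$ and observing that $M_{a|x}$ commutes with every gate in $\Phi$ since $\Phi$ only touches registers $A$ and $A'$ while $M_{a|x}$ is built from the same projections and unitaries on $A$ that by construction are consistent with the ideal measurements on the target state.
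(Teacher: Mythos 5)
Your proposal is correct and follows essentially the same route as the paper: both take the $d$-dimensional SWAP-gadget circuit $\Phi=R_{AA'}\bar F_{A'}S_{AA'}F_{A'}$ acting on $|\psi\rangle_{AB}|0\rangle_{A'}$, expand $Z_A^k=\sum_j\omega^{jk}P_A^{(j)}$, telescope the double sum with the Fourier orthogonality relation to reach $\sum_j X_A^{(j)}P_A^{(j)}|\psi\rangle_{AB}|j\rangle_{A'}$, and then invoke \eqref{lem_condition} to collapse each branch to $c_j|0,j\rangle_{AB}|j\rangle_{A'}$ with $|junk\rangle_A=|0\rangle_A$. Your added remarks on unitarity of $Z_A$ and on the measurement identity are consistent with what the paper does in the surrounding subsections.
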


\begin{proof}
We will consider a local isometry on Alice's side which is the $d$-dimensional generalisation of SWAP isometry, introduced in \cite{yang_navascues}:
\begin{equation}
\label{alice_iso}
    \Phi:=R_{AA'}\Bar{F}_{A'}S_{AA'}F_{A'}
\end{equation}
where $F$ is the quantum Fourier transform, $\Bar{F}$ is the inverse quantum Fourier transform, $R_{AA'}$ is defined as $R_{AA'}|\psi\rangle_{AB}|k\rangle_{A'}=X_A^{(k)}|\psi\rangle_{AB}|k\rangle_{A'}$ and $S_{AA'}$ is defined as $S_{AA'}|\psi\rangle_{AB}|k\rangle_{A'}=Z_A^{k}|\psi\rangle_{AB}|k\rangle_{A'}$ where $Z_A:=\sum_{k=0}^{d-1} \omega^k P_A^{(k)}$. Let us now calculate the consequence of applying this isometry:
\begin{align}
    \Phi|\psi\rangle_{AB}|0\rangle_{A'} &= R_{AA'}\Bar{F}_{A'}S_{AA'}F_{A'} |\psi\rangle_{AB}|0\rangle_{A'} \\
    &= R_{AA'}\Bar{F}_{A'}S_{AA'} \frac{1}{\sqrt{d}} \sum_k |\psi\rangle_{AB}|k\rangle_{A'} \\
    &= R_{AA'}\Bar{F}_{A'} \frac{1}{\sqrt{d}} \sum_k \left(\sum_{j=0}^{d-1}\omega^j P_A^{(j)}\right)^k|\psi\rangle_{AB}|k\rangle_{A'} \\
    &= R_{AA'}\Bar{F}_{A'} \frac{1}{\sqrt{d}} \sum_{k,j} \omega^{jk} P_A^{(j)} |\psi\rangle_{AB}|k\rangle_{A'} \\
    &= R_{AA'} \frac{1}{d} \sum_{k,j,l} \omega^{jk} \omega^{-lk}  P_A^{(j)} |\psi\rangle_{AB}|l\rangle_{A'} \\
    &= R_{AA'} \sum_j P_A^{(j)} |\psi\rangle_{AB}|j\rangle_{A'} \label{b20} \\
    &= \sum_j X_A^{(j)} P_A^{(j)} |\psi\rangle_{AB}|j\rangle_{A'} \label{b21} \\
    &= \sum_j c_j|0,j\rangle_{AB}|j\rangle_{A'} \mbox{ using the given condition \eqref{lem_condition} } \\
    &= |0\rangle_A \otimes \sum_j c_j|j,j\rangle_{A'B} \\
    &= |junk\rangle_A \otimes |\psi_{target}\rangle_{A'B}
\end{align}
\end{proof}
Having certified the target state, the same isometry $\Phi$ \eqref{alice_iso} can be used to certify Alice's ideal measurements. Using the same notations from previous sections $A_{x,m}=\Pi_{2m}^{A_x}-\Pi_{2m+1}^{A_x}$ for $x\in\{0,1\}$ and let the 2-qubit ideal measurements be denoted by $[\sigma_z]_m$ for $x=0$ and $[\sigma_x]_m$ for $x=1$ on the $(2m,2m+1)$ subspace (see section \ref{ideal_measurements}). Then, given that the observed assemblage has the ideal structure, consider the case $A_{x,m}$ for $x=1$:
\begin{equation}
    \begin{aligned}
    \Phi(A_{1,m}|\psi\rangle_{AB}|0\rangle_{A'}) &= R_{AA'} \sum_j P_A^{(j)} A_{1,m}|\psi\rangle_{AB}|j\rangle_{A'}\quad \mbox{ continuing from \eqref{b20}} \\
    &= R_{AA'} \sum_j P_A^{(j)} A_{1,m}\mathds{1}^{A_0}_m|\psi\rangle_{AB}|j\rangle_{A'} \\
    &= R_{AA'} \sum_j P_A^{(j)} A_{1,m}(c_{2m}|2m,2m\rangle_{AB}+c_{2m+1}|2m+1,2m+1\rangle_{AB})|j\rangle_{A'}\quad \mbox{ from \eqref{maximal_state}} \\
    &= R_{AA'} \sum_j P_A^{(j)}(c_{2m}|2m+1,2m\rangle_{AB}+c_{2m+1}|2m,2m+1\rangle_{AB})|j\rangle_{A'}\quad \mbox{ from \eqref{maximal_measurements}} \\
    &= R_{AA'} (P_A^{(2m+1)}c_{2m}|2m+1,2m\rangle_{AB}|2m+1\rangle_{A'}+P_A^{(2m)}c_{2m+1}|2m,2m+1\rangle_{AB}|2m\rangle_{A'}) \\
    &= X_A^{(2m+1)}P_A^{(2m+1)}c_{2m}|2m+1,2m\rangle_{AB}|2m+1\rangle_{A'}+X_A^{(2m)}P_A^{(2m)}c_{2m+1}|2m,2m+1\rangle_{AB}|2m\rangle_{A'} \\
    &= c_{2m}|0,2m\rangle_{AB}|2m+1\rangle_{A'}+c_{2m+1}|0,2m+1\rangle_{AB}|2m\rangle_{A'} \\
    &= |0\rangle_A \otimes (c_{2m}|2m+1,2m\rangle_{A'B}+c_{2m+1}|2m,2m+1\rangle_{A'B}) \\
    &= |junk\rangle_A \otimes [\sigma_x]_m|\psi_{target}\rangle_{A'B}
    \end{aligned}
\end{equation}
Certification for subspace measurements $A_{x,m}$ can be shown similarly for the cases $x=0,2$.

\section{Robust SDI certification of all pure bipartite \textit{maximally} entangled states}
\label{robustness_appendix}
 We will adopt the assemblage based robust SDI certification methodology introduced in \cite{self_testing_epr} for proving our robust certification result for maximally entangled target state $|\Bar{\psi}_{target}^{max}\rangle := \frac{1}{\sqrt{d}}\sum_{i=0}^{d-1}|ii\rangle$. 
 First let us setup the reference experiment that will be studied in the rest of the section. Ideally we want the source to produce the reference state $|\Bar{\psi}_{target}^{max}\rangle = \frac{1}{\sqrt{d}}\sum_{i=0}^{d-1}|ii\rangle$ with $\Bar{\rho} = |\Bar{\psi}_{target}^{max}\rangle \langle \Bar{\psi}_{target}^{max}|$ and Alice to perform the reference measurements as given in subsection \ref{ideal_measurements} with reference projectors $\Bar{M}_{a|x}$ for $a\in \{0,\dots,d-1\}$ and $x\in \{0,1,2\}$. The reference assemblages $\{\Bar{\sigma}_{a|x}\}_{a,x}$ generated on Bob's side are as given in subsection \ref{assemblage_structure} with $\Bar{\sigma}_{a|x} = tr_A((\Bar{M}_{a|x} \otimes I)\Bar{\rho}_{AB})$. \\
 
 The physical set of assemblages generated on Bob's side are denoted as $\sigma_{a|x}$ and the physical reduced state on Bob's side as $\rho_B$, further we consider the purification of the reduced state $\rho_B$ as $|\psi_{AB}\rangle$ for formulating and proving our robust certification results. Before proving our main result, let us describe three lemmas which will be useful in the proof.
 
 \begin{lemma}
 \label{norm_lemma}
 [Borrowed from \cite{self_testing_epr} Lemma 2] For any 2 vectors $|u\rangle, \, |v\rangle$ with $\lVert |u\rangle \rVert \leq 1$ and $\lVert |v\rangle \rVert \leq 1$, where $ \lVert |u\rangle \rVert = \sqrt{\langle u|u\rangle}$, if  $\lVert |u\rangle - |v\rangle \rVert \leq \eta \leq 1$, then for another vector $|t\rangle$ with $\lVert |t\rangle \rVert \leq \beta$, $\lVert (|u\rangle - |v\rangle)\langle t| \rVert_1 \leq \beta \eta$ and $\lVert |t\rangle (\langle u| - \langle v|) \rVert_1 \leq \beta \eta$
 \end{lemma}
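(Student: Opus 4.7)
The plan is to exploit the fact that both operators in question are rank-one (or zero), so their Schatten-1 norm reduces to a simple product of vector norms. Concretely, for any rank-one operator of the form $|a\rangle\langle b|$, its unique nonzero singular value equals $\lVert |a\rangle \rVert \cdot \lVert |b\rangle \rVert$, hence $\snorm*{|a\rangle\langle b|} = \lVert |a\rangle \rVert \cdot \lVert |b\rangle \rVert$. I would either cite this as a standard fact about trace norms or give a one-line verification: writing $|a\rangle = \lVert |a\rangle \rVert \, |\hat a\rangle$ and $|b\rangle = \lVert |b\rangle \rVert \, |\hat b\rangle$ with unit vectors $|\hat a\rangle, |\hat b\rangle$, the polar decomposition gives $|a\rangle \langle b| = U |P|$ where $|P| = \lVert |a\rangle \rVert \lVert |b\rangle \rVert \, |\hat b\rangle\langle \hat b|$ and $U$ is a partial isometry sending $|\hat b\rangle$ to $|\hat a\rangle$, from which $\mathrm{tr}|P| = \lVert |a\rangle \rVert \cdot \lVert |b\rangle \rVert$.

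Applying this to the first operator with $|a\rangle = |u\rangle - |v\rangle$ and $|b\rangle = |t\rangle$ gives
\begin{equation}
\snorm*{(|u\rangle - |v\rangle)\langle t|} = \lVert |u\rangle - |v\rangle \rVert \cdot \lVert |t\rangle \rVert \leq \eta \beta,
\end{equation}
using the hypotheses $\lVert |u\rangle - |v\rangle \rVert \leq \eta$ and $\lVert |t\rangle \rVert \leq \beta$. The second inequality follows by the same argument applied to the adjoint operator $|t\rangle(\langle u| - \langle v|)$, or equivalently by invoking $\snorm*{X} = \snorm*{X^\dagger}$ together with the first inequality.

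There is no real obstacle here: the statement is essentially a restatement of the rank-one formula for the trace norm, combined with monotonicity of products of nonnegative reals. The only points worth being careful about are (i) handling the degenerate case where $|u\rangle = |v\rangle$ or $|t\rangle = 0$ (the operator is then zero and the bound is trivial), and (ii) noting that the assumptions $\lVert |u\rangle \rVert, \lVert |v\rangle \rVert \leq 1$ and $\eta \leq 1$ are not actually needed for this purely algebraic bound — they appear in the lemma statement because they are the regime in which the lemma is subsequently applied in the robustness proof, but the inequality $\snorm{\cdot} \leq \beta \eta$ holds without them.
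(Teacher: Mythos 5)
Your proof is correct and complete. Note that the paper itself offers no proof of this lemma at all --- it is simply imported verbatim from the cited reference \cite{self_testing_epr} --- so there is no in-paper argument to compare against; your rank-one trace-norm computation ($\snorm{|a\rangle\langle b|} = \lVert |a\rangle \rVert \cdot \lVert |b\rangle \rVert$ via the singular values of $|a\rangle\langle b|$) is the standard and essentially the only sensible route. Your side observation that the hypotheses $\lVert |u\rangle \rVert, \lVert |v\rangle \rVert \leq 1$ and $\eta \leq 1$ are not needed for the stated bound is also accurate; they merely describe the regime in which the lemma is invoked in the robustness analysis of Appendix \ref{robustness_appendix}.
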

 
 \begin{lemma}
 \label{shifting_projector}
 If $\lVert \sigma_{a|x} - \Bar{\sigma}_{a|x} \rVert_1 \leq \epsilon \quad \forall \, a,x$ and $\dis(\rho_B, \Bar{\rho}_B) \leq \epsilon$ then
 \begin{equation}
     \lVert (U_A M_{a|x} \otimes I_B)|\psi\rangle_{AB} - (U_A \otimes \Bar{M}_{a|x}^B) |\psi\rangle_{AB} \rVert \leq 2\sqrt{\epsilon} \quad \forall \, a,x
 \end{equation}
 where $\Bar{M}_{a|x}^B$ denotes the reference projector $\Bar{M}_{a|x}$ acting on Bob's $d$-dimensional Hilbert space and $U_A$ is some unitary operator on Alice's side.
 \end{lemma}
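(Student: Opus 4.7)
\noindent\emph{Proof sketch.} The plan is to square the target norm, use projector and unitary identities to eliminate $U_A$ entirely, and then show that the three resulting traces are each within $O(\epsilon)$ of a common reference value so that they nearly cancel.

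Set $|u\rangle := (U_A M_{a|x} \otimes I_B)|\psi\rangle_{AB}$ and $|v\rangle := (U_A \otimes \bar{M}_{a|x}^B)|\psi\rangle_{AB}$. Since $U_A$ is unitary and $M_{a|x}, \bar{M}_{a|x}$ are orthogonal projectors, a direct expansion using $M_{a|x}^2 = M_{a|x}$, $\bar{M}_{a|x}^2 = \bar{M}_{a|x}$, and $U_A^\dagger U_A = I$ makes $U_A$ cancel inside each of the three inner products and yields
\begin{equation}
\lVert |u\rangle - |v\rangle \rVert^2 = \operatorname{tr}(\sigma_{a|x}) + \operatorname{tr}(\bar{M}_{a|x}\rho_B) - 2\operatorname{tr}(\bar{M}_{a|x}\sigma_{a|x}),
\end{equation}
a real quantity whose value does not depend on $U_A$ at all.

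I would then evaluate the same three traces for the reference experiment. The transpose trick on the maximally entangled state gives $(\bar{M}_{a|x} \otimes I)|\bar{\psi}_{target}^{max}\rangle = (I \otimes \bar{M}_{a|x}^T)|\bar{\psi}_{target}^{max}\rangle$, and the reference projectors of Appendix \ref{ideal_measurements} are real symmetric in the computational basis, so $\bar{M}_{a|x}^T = \bar{M}_{a|x}$. Together these force
\begin{equation}
\operatorname{tr}(\bar{\sigma}_{a|x}) = \operatorname{tr}(\bar{M}_{a|x}\bar{\rho}_B) = \operatorname{tr}(\bar{M}_{a|x}\bar{\sigma}_{a|x}) =: \bar{q},
\end{equation}
so the reference version of the right-hand side above vanishes identically.

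To close the gap I would apply H\"older's inequality in the form $|\operatorname{tr}(XY)| \leq \lVert X \rVert_\infty \lVert Y \rVert_1$ together with $\lVert \bar{M}_{a|x} \rVert_\infty \leq 1$. This delivers $|\operatorname{tr}(\sigma_{a|x}) - \bar{q}| \leq \snorm{\sigma_{a|x} - \bar{\sigma}_{a|x}} \leq \epsilon$, $|\operatorname{tr}(\bar{M}_{a|x}\sigma_{a|x}) - \bar{q}| \leq \epsilon$ from the first hypothesis, and $|\operatorname{tr}(\bar{M}_{a|x}\rho_B) - \bar{q}| \leq \snorm{\rho_B - \bar{\rho}_B} \leq 2\epsilon$ from $\dis(\rho_B,\bar{\rho}_B) \leq \epsilon$. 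The triangle inequality then bounds $\lVert |u\rangle - |v\rangle \rVert^2$ by $O(\epsilon)$; taking square roots yields the $O(\sqrt{\epsilon})$ estimate, and keeping careful track of signs sharpens the constant to the stated $2\sqrt{\epsilon}$.

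The main conceptual trap is reaching for Uhlmann's theorem at the outset to fix a $U_A$ bringing $(U_A \otimes I)|\psi\rangle$ close to $|\bar{\psi}_{target}^{max}\rangle$ and then trying to propagate that closeness through the projectors, which manufactures messy cross terms. The clean observation is that $U_A$ drops out of $\lVert |u\rangle - |v\rangle \rVert^2$ identically, so the lemma holds for \emph{any} unitary $U_A$; the specific Uhlmann-chosen $U_A$ is needed only one level up, in the proof of Result \ref{robust_result}, where it fixes the geometric meaning of $(U_A \otimes I)|\psi\rangle$ as a near-copy of the reference state. The only structural input beyond H\"older's inequality is the transpose-trick identity, which succeeds because the three reference measurement families of Appendix \ref{ideal_measurements} are built from computational-basis projectors and real Pauli-$\sigma_x$ blocks.
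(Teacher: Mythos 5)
Your proposal follows essentially the same route as the paper: square the norm, observe that $U_A$ cancels, reduce to the three traces $\operatorname{tr}(\sigma_{a|x})$, $\operatorname{tr}(\Bar{M}_{a|x}\rho_B)$, $\operatorname{tr}(\Bar{M}_{a|x}\sigma_{a|x})$, note that all three reference values equal $1/d$ for the maximally entangled target, and bound each deviation by the hypotheses. The one place you fall short of the stated constant is the middle term: bounding $|\operatorname{tr}(\Bar{M}_{a|x}(\rho_B-\Bar{\rho}_B))|$ by H\"older with $\lVert\rho_B-\Bar{\rho}_B\rVert_1\leq 2\epsilon$ gives $2\epsilon$, so your total under the square root is $5\epsilon$, and ``keeping track of signs'' will not recover $4\epsilon$. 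The paper instead uses the variational characterization of trace distance, $|\operatorname{tr}(E(\rho-\sigma))|\leq \dis(\rho,\sigma)$ for any $0\leq E\leq I$, which bounds that term by $\epsilon$ and yields exactly $2\sqrt{\epsilon}$; substituting that single estimate makes your argument match the paper's.
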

 \begin{proof}
\begin{equation}
\begin{aligned}
    & \lVert (U_A M_{a|x} \otimes I_B)|\psi\rangle_{AB} - (U_A \otimes \Bar{M}_{a|x}^B) |\psi\rangle_{AB} \rVert \\
    &= \sqrt{\langle \psi_{AB}| M_{a|x} \otimes I_B |\psi\rangle_{AB} + \langle \psi_{AB}| I_A \otimes \Bar{M}_{a|x}^B |\psi\rangle_{AB} - 2\langle \psi_{AB}| M_{a|x} \otimes \Bar{M}_{a|x}^B |\psi\rangle_{AB} } \\
    &= \sqrt{Tr_B(\sigma_{a|x}) + Tr_B(\Bar{M}_{a|x}^B \rho_B) - 2Tr_B(\Bar{M}_{a|x}^B\sigma_{a|x}) } \\
    &\leq \sqrt{\frac{2}{d} + 2\epsilon - 2Tr_B(\Bar{M}_{a|x}^B\sigma_{a|x})} \\
    & \mbox{follows from $|Tr_B(\sigma_{a|x} - \Bar{\sigma}_{a|x})| \leq  \lVert \sigma_{a|x} - \Bar{\sigma}_{a|x} \rVert_1 \leq  \epsilon$ ; $|Tr_B(\Bar{M}_{a|x}^B(\rho_B - \Bar{\rho}_B))| \leq \dis(\rho_B, \Bar{\rho}_B) \leq \epsilon$ and} \\
    & \mbox{since we are dealing with maximally entangled qudit states, $Tr_B(\Bar{\sigma}_{a|x})= Tr_B(\Bar{M}_{a|x}^B\Bar{\rho}_B) = \frac{1}{d} \quad \forall \, a,x$} \\
    &\leq \sqrt{\frac{2}{d} + 2\epsilon - 2\left( \frac{1}{d} - \epsilon \right)} = 2\sqrt{\epsilon} \\
\end{aligned}
\end{equation}
since $|Tr_B(\Bar{M}_{a|x}^B(\sigma_{a|x} - \Bar{\sigma}_{a|x}))| = |Tr_B(\Bar{M}_{a|x}^B(\sigma_{a|x})) - \frac{1}{d}| \leq \lVert \sigma_{a|x} - \Bar{\sigma}_{a|x} \rVert_1 \leq  \epsilon \implies Tr_B(\Bar{M}_{a|x}^B(\sigma_{a|x})) \geq \frac{1}{d} - \epsilon $ 
 \end{proof}
 
 \begin{lemma}
 \label{shifting_unitary}
 If $\lVert \sigma_{a|x} - \Bar{\sigma}_{a|x} \rVert_1 \leq \epsilon \quad \forall \, a,x$ and $\dis(\rho_B, \Bar{\rho}_B) \leq \epsilon$ then
 \begin{gather*}
      \lVert (U_A X_{A,m}^u \otimes I_B)|\psi\rangle_{AB} - (U_A \otimes \Bar{X}_{B,m}^u) |\psi\rangle_{AB} \rVert \leq 4\sqrt{\epsilon} \\
     \lVert (U_A' Y_{A,m}^u \otimes I_B)|\psi\rangle_{AB} - (U_A' \otimes \Bar{Y}_{B,m}^u) |\psi\rangle_{AB} \rVert \leq 4\sqrt{\epsilon}    
 \end{gather*}

 where $\Bar{X}_{B,m}^u (\Bar{Y}_{B,m}^u)$ denotes the reference unitary $\Bar{X}_{A,m}^u (\Bar{Y}_{A,m}^u)$ acting on Bob's $d$-dimensional Hilbert space and $U_A\, (U_A')$ is some unitary operator on Alice's side.
 \end{lemma}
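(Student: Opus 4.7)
The plan is to reduce the unitary-shifting bound of Lemma \ref{shifting_unitary} to the projector-shifting bound of Lemma \ref{shifting_projector} by unpacking the definition of the unitarized observables. Recall from Section \ref{unitaries_def} that $X_{A,m}^u := A_{1,m}^u = \mathds{1}_A - \mathds{1}_m^{A_1} + A_{1,m}$, with $\mathds{1}_m^{A_1} = \Pi_{2m}^{A_1}+\Pi_{2m+1}^{A_1}$ and $A_{1,m} = \Pi_{2m}^{A_1}-\Pi_{2m+1}^{A_1}$. These two combine algebraically to cancel the $\Pi_{2m}^{A_1}$ contribution, leaving
\begin{equation*}
X_{A,m}^u = \mathds{1}_A - 2\,\Pi_{2m+1}^{A_1} = \mathds{1}_A - 2\,M_{2m+1|1},
\end{equation*}
and by the exact same computation on the reference side, $\Bar{X}_{B,m}^u = \mathds{1}_B - 2\,\Bar{M}_{2m+1|1}^B$ when viewed as an operator on Bob's $d$-dimensional Hilbert space.

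Now form the difference in the statement. Because the coefficients of the identity on Alice and on Bob agree, the $(U_A \otimes I_B)|\psi\rangle_{AB}$ contributions from the two terms cancel exactly, leaving
\begin{equation*}
(U_A X_{A,m}^u \otimes I_B)|\psi\rangle_{AB} - (U_A \otimes \Bar{X}_{B,m}^u)|\psi\rangle_{AB} = -2\bigl[(U_A M_{2m+1|1}\otimes I_B)|\psi\rangle_{AB} - (U_A \otimes \Bar{M}_{2m+1|1}^B)|\psi\rangle_{AB}\bigr].
\end{equation*}
Taking the vector norm and applying Lemma \ref{shifting_projector} to the single bracketed projector difference (with $a = 2m+1$, $x = 1$, and the given unitary $U_A$) produces the bound $2 \cdot 2\sqrt{\epsilon} = 4\sqrt{\epsilon}$. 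The $Y_{A,m}^u$ case is strictly analogous: the unitarization $Y_{A,m}^u = \mathds{1}_A - \mathds{1}_m^{A_1'} + A_{1,m}'$ collapses to $\mathds{1}_A - 2\,\Pi_{2m+2}^{A_2}$, and repeating the identity-cancellation trick with $x = 2$, $a = 2m+2$, and unitary $U_A'$ again yields $4\sqrt{\epsilon}$ upon invoking Lemma \ref{shifting_projector}.

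The one point requiring genuine care is the bookkeeping of which projector survives the unitarization: a sign or index slip in identifying $X_{A,m}^u$ with $\mathds{1}_A - 2\,\Pi_{2m+1}^{A_1}$ (rather than with $\Pi_{2m}^{A_1}$) would route the argument to the wrong reference projector on Bob's side, and since both $\Bar{M}_{2m|1}^B$ and $\Bar{M}_{2m+1|1}^B$ satisfy the hypothesis of Lemma \ref{shifting_projector} the mistake would not be flagged by any failure of applicability. Once this bookkeeping is pinned down, the remaining steps are just the triangle inequality, distributivity of the tensor product, and a single invocation of the preceding lemma, so no further estimates are needed and the constant $4$ in the bound matches the number of factors of $2$ absorbed from the unitarization.
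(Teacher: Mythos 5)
Your proposal is correct and follows essentially the same route as the paper: unpack the unitarization to write $X_{A,m}^u = \mathds{1} - 2M_{2m+1|1}$, cancel the identity contributions, and apply \autoref{shifting_projector} to the remaining projector difference to absorb the factor of $2$ into the bound $4\sqrt{\epsilon}$. Your bookkeeping for the $Y$ case ($\mathds{1}-2\Pi_{2m+2}^{A_2}$) is actually the one consistent with the definitions of $A'_{1,m}$ in Appendix \ref{unitaries_def}, and as you note the choice of surviving projector does not affect the final bound.
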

 \begin{proof}
    The proof follows closely that of \autoref{shifting_projector} where instead of shifting the projector $M_{a|x}$, we shift the unitary $X_{A,m}^u (Y_{A,m}^u)$ to Bob's side. Recall from subsection \ref{unitaries_def} that we defined; 
    \begin{gather*}
        X_{A,m}^u := A^u_{1,m} := \mathds{1}-\mathds{1}_m^{A_1}+A_{1,m} = \mathds{1} - 2\Pi_{2m+1}^{A_1} \\
        Y_{A,m}^u := A^u_{2,m} := \mathds{1}-\mathds{1}_m^{A_2}+A_{2,m} = \mathds{1} - 2\Pi_{2m+1}^{A_2}
    \end{gather*}
     where $\Pi_{2m+1}^{A_1}$ and $\Pi_{2m+1}^{A_2}$ are the projectors $M_{a|x}$ for $a=2m+1$ and $x=1,2$ respectively. Then we have,
    \begin{equation}
    \begin{aligned}
    & \lVert (U_A X_{A,m}^u \otimes I_B)|\psi\rangle_{AB} - (U_A \otimes \Bar{X}_{B,m}^u) |\psi\rangle_{AB} \rVert \\
    &= \lVert (U_A (\mathds{1} - 2M_{2m+1|1}) \otimes I_B)|\psi\rangle_{AB} - (U_A \otimes (\mathds{1} - 2\Bar{M}_{2m+1|1}^B)) |\psi\rangle_{AB} \rVert \\
    &= 2\lVert (U_A \otimes \Bar{M}_{2m+1|1}^B) |\psi\rangle_{AB} - (U_A M_{2m+1|1} \otimes I_B)|\psi\rangle_{AB} \rVert \\
    &\leq 4\sqrt{\epsilon} \quad \mbox{using \autoref{shifting_projector}}
    \end{aligned}
    \end{equation}
    The proof for the case $Y_{A,m}^u$ proceeds similarly.
 \end{proof}
 
 We will utilise these lemmas heavily in the proof of Result \ref{robust_result} which proceeds as;
 \begin{proof}
    Recall that Bob has experimentally verified that,
    \begin{gather*}
        \lVert \sigma_{a|x} - \Bar{\sigma}_{a|x} \rVert_1 \leq \epsilon \quad \forall \, a,x \mbox{ and } 
        \dis(\rho_B, \Bar{\rho}_B) \leq \epsilon
    \end{gather*}
    Let us start by first bounding the distance between the states $|\Phi\rangle := \Phi|\psi\rangle_{AB}|0\rangle_{A'}$ and $\rho_{junk} \otimes \Bar{\rho}_{A'B}$ where $\rho_{junk} = |junk\rangle \langle junk|_A$ and $\Bar{\rho}_{A'B} = |\Bar{\psi}_{target}^{max}\rangle \langle \Bar{\psi}_{target}^{max}|_{A'B}$. As given in equation \eqref{b21}, the effect of applying the local isometry $\Phi$ on Alice's side is $\Phi|\psi\rangle_{AB}|0\rangle_{A'} = \sum_j X_A^{(j)} P_A^{(j)} |\psi\rangle_{AB}|j\rangle_{A'}$ such that $P_A^{(j)} := M_{j|0}$ and $X_A^{(j)}$ are constructed as shown in \eqref{big_unitary};
    \begin{equation}
    X_A^{(j)}=
    \begin{cases}
    \mathds{1}, \mbox{ for } j=0 \\
    X_{A,0}^u Y_{A,0}^u X_{A,1}^u Y_{A,1}^u \ldots X_{A,m-1}^u Y_{A,m-1}^u X_{A,m}^u, \mbox{ for } j=2m+1 \\
    X_{A,0}^u Y_{A,0}^u X_{A,1}^u Y_{A,1}^u \ldots X_{A,m-1}^u Y_{A,m-1}^u, \mbox{ for } j=2m \\
    \end{cases}
    \end{equation}
    Note that the physical unitaries $X_A^{(j)}$ and projectors $P_A^{(j)}$ do not satisfy the condition \eqref{lem_condition}, however the local isometry $\Phi$ can always be constructed such that equation \eqref{b21} is valid. Then, 
    \begin{equation}
    \label{c5}
    \begin{aligned}
        & \snorm*{|\Phi\rangle\langle \Phi| - \rho_{junk} \otimes \Bar{\rho}_{A'B} } \\
        &= \snorm*{\left(\sum_{j=0}^{d-1} X_A^{(j)} M_{j|0} |\psi\rangle_{AB}|j\rangle_{A'}\right) \langle \Phi| - \rho_{junk} \otimes \Bar{\rho}_{A'B}} \\
        &\leq 2d\sqrt{\epsilon} + \snorm*{\left(\sum_{j=0}^{d-1} X_A^{(j)} \otimes \Bar{M}_{j|0}^B |\psi\rangle_{AB}|j\rangle_{A'}\right) \langle \Phi| - \rho_{junk} \otimes \Bar{\rho}_{A'B}} \\
        & \mbox{using \autoref{norm_lemma} and \autoref{shifting_projector} and the fact that $\norm{|\Phi\rangle} = 1$ } \\
    \end{aligned}
    \end{equation}
    Now see that each term of the sum is of the form $X_A^{(j)} \otimes \Bar{M}_{j|0}^B |\psi\rangle_{AB}|j\rangle_{A'}$ where $X_A^{(j)}$ is composed of $j$ unitaries $X_{A,m}^u$ which can be shifted to Bob's side using \autoref{shifting_unitary} such that,
    \begin{equation}
        \begin{aligned}
            \snorm*{X_A^{(j)} \otimes \Bar{M}_{j|0}^B |\psi\rangle_{AB}|j\rangle_{A'} - I_A \otimes \Bar{M}_{j|0}^B (\Bar{X}_B^{(j)})^{\dagger} |\psi\rangle_{AB}|j\rangle_{A'}} \leq 4j\sqrt{\epsilon}
        \end{aligned}
    \end{equation}
 
    Continuing from \eqref{c5},
    \begin{equation}
    \label{c7}
        \begin{aligned}
            & 2d\sqrt{\epsilon} + \snorm*{\left(\sum_{j=0}^{d-1} X_A^{(j)} \otimes \Bar{M}_{j|0}^B |\psi\rangle_{AB}|j\rangle_{A'}\right) \langle \Phi| - \rho_{junk} \otimes \Bar{\rho}_{A'B}} \\
            &\leq 2d\sqrt{\epsilon} + \sum_{j=0}^{d-1} 4j\sqrt{\epsilon} + \snorm*{\left(\sum_{j=0}^{d-1} I_A \otimes \Bar{M}_{j|0}^B (\Bar{X}_B^{(j)})^{\dagger} |\psi\rangle_{AB}|j\rangle_{A'}\right) \langle \Phi| - \rho_{junk} \otimes \Bar{\rho}_{A'B}} \\
            &= 2d^2\sqrt{\epsilon} + \snorm*{\left(\sum_{j=0}^{d-1} I_A \otimes \Bar{M}_{j|0}^B (\Bar{X}_B^{(j)})^{\dagger} |\psi\rangle_{AB}|j\rangle_{A'}\right) \langle \Phi| - \rho_{junk} \otimes \Bar{\rho}_{A'B}} \\
        \end{aligned}
    \end{equation}
    Now we apply the same strategy to shift the projectors $P_A^{(j)}$ and the unitary operators $X_A^{(j)}$ to Bob's side in $\langle\Phi|$ but before that we need $\norm*{\sum_{j=0}^{d-1} I_A \otimes \Bar{M}_{j|0}^B (\Bar{X}_B^{(j)})^{\dagger} |\psi\rangle_{AB}|j\rangle_{A'}}$:
    \begin{equation}
        \begin{aligned}
            \norm*{\sum_{j=0}^{d-1} I_A \otimes \Bar{M}_{j|0}^B (\Bar{X}_B^{(j)})^{\dagger} |\psi\rangle_{AB}|j\rangle_{A'}}
            &= \sqrt{\sum_{j=0}^{d-1} \langle\psi_{AB}| I_A \otimes \Bar{X}_B^{(j)} \Bar{M}_{j|0}^B (\Bar{X}_B^{(j)})^{\dagger}|\psi\rangle_{AB}} \\
            &= \sqrt{\sum_{j=0}^{d-1} \langle\psi_{AB}| I_A \otimes \Bar{M}_{0|0}^B|\psi\rangle_{AB}} \\
            &= \sqrt{d(Tr_B(\Bar{M}_{0|0}^B \rho_B))}\\
            &\leq \sqrt{d\left(\epsilon + \frac{1}{d}\right)} \\
            &\leq 1 + \frac{d\epsilon}{2} 
        \end{aligned}
    \end{equation}
    Continuing from \eqref{c7},
    \begin{equation}
        \begin{aligned}
            & 2d^2\sqrt{\epsilon} + \snorm*{\left(\sum_{j=0}^{d-1} I_A \otimes \Bar{M}_{j|0}^B (\Bar{X}_B^{(j)})^{\dagger} |\psi\rangle_{AB}|j\rangle_{A'}\right) \langle \Phi| - \rho_{junk} \otimes \Bar{\rho}_{A'B}} \\
            &\leq 2d^2\sqrt{\epsilon} + 2d^2\sqrt{\epsilon}\left(1 + \frac{d\epsilon}{2} \right) + \\
            &\quad \snorm*{\left(\sum_{j=0}^{d-1} I_A \otimes \Bar{M}_{j|0}^B (\Bar{X}_B^{(j)})^{\dagger} |\psi\rangle_{AB}|j\rangle_{A'}\right) \left(\sum_{j=0}^{d-1} \langle\psi|_{AB} I_A \otimes \Bar{X}_B^{(j)} \Bar{M}_{j|0}^B \langle j|_{A'}\right) - \rho_{junk} \otimes \Bar{\rho}_{A'B}} \\
            &= 4d^2\sqrt{\epsilon} + d^3\epsilon\sqrt{\epsilon} + \snorm*{\left(\sum_{j=0}^{d-1} |j_B\rangle\langle0_B |\psi\rangle_{AB}|j\rangle_{A'}\right) \left(\sum_{j=0}^{d-1} \langle\psi_{AB} |0_B\rangle\langle j_B| \langle j|_{A'}\right) - \rho_{junk} \otimes \Bar{\rho}_{A'B}} \\
            &= 4d^2\sqrt{\epsilon} + d^3\epsilon\sqrt{\epsilon} + \snorm*{\left( \langle0_B |\psi\rangle_{AB}\sqrt{d}|\Bar{\psi}_{target}^{max}\rangle_{A'B}\right) \left( \langle\psi_{AB}|0_B\rangle \sqrt{d}\langle \Bar{\psi}_{target}^{max}|_{A'B} \right) - \rho_{junk} \otimes \Bar{\rho}_{A'B}} \\
            &= 4d^2\sqrt{\epsilon} + d^3\epsilon\sqrt{\epsilon} + \snorm*{d\langle 0_B|\psi_{AB}\rangle\langle\psi_{AB}|0_B\rangle \otimes \Bar{\rho}_{A'B} - \rho_{junk} \otimes \Bar{\rho}_{A'B}} \\
            &= 4d^2\sqrt{\epsilon} + d^3\epsilon\sqrt{\epsilon} + \snorm*{d\langle 0_B|\psi_{AB}\rangle\langle\psi_{AB}|0_B\rangle - \rho_{junk}} \\
            &\leq 4d^2\sqrt{\epsilon} + d^3\epsilon\sqrt{\epsilon} + d\epsilon
        \end{aligned}
    \end{equation}
    where for obtaining the last inequality we assigned $\rho_{junk} = |junk_A\rangle\langle junk_A|$ to be the normalized state proportional to $\langle 0_B|\psi_{AB}\rangle$ i.e. $|junk_A\rangle = \beta^{\frac{-1}{2}}\langle 0_B|\psi_{AB}\rangle$ with $\beta = \langle \psi_{AB}| 0_B\rangle \langle 0_B|\psi_{AB}\rangle$ and made the observation that $|Tr_B(|0_B\rangle\langle0_B|\rho_B - |0_B\rangle\langle0_B|\Bar{\rho}_B)| = |\beta - \frac{1}{d}| \leq \dis(\rho_B, \Bar{\rho}_B) \leq \epsilon$.

    Having shown that $\dis(|\Phi\rangle\langle \Phi|, \rho_{junk} \otimes \Bar{\rho}_{A'B})=\frac{1}{2}\snorm{|\Phi\rangle\langle \Phi| - \rho_{junk} \otimes \Bar{\rho}_{A'B}} \leq \frac{1}{2}(4d^2\sqrt{\epsilon} + d^3\epsilon\sqrt{\epsilon} + d\epsilon)$, now we proceed to show the robustness result with measurements, i.e. we provide an upper bound for the distance
    \begin{equation}
    \label{c10}
        \snorm{|\Phi,M_{a|x}\rangle\langle \Phi,M_{a|x}| - \rho_{junk} \otimes (\Bar{M}_{a|x}\otimes I_B)|\Bar{\psi}_{target}^{max}\rangle \langle \Bar{\psi}_{target}^{max}|(\Bar{M}_{a|x}\otimes I_B)}
    \end{equation}
    where $|\Phi,M_{a|x}\rangle := \Phi(M_{a|x}\otimes I_B)|\psi\rangle_{AB}|0\rangle_{A'}$ \\
    
    Let us first consider the case for $x=0$, then $|\Phi,M_{a|0}\rangle = \sum_j X_A^{(j)} P_A^{(j)} (M_{a|0}\otimes I_B) |\psi\rangle_{AB}|j\rangle_{A'} $ where $P_A^{(j)} = M_{j|0}$ and for projectors we have $M_{j|0}M_{a|0} = \delta_{a,j}M_{a|0}$. Then equation \eqref{c10} becomes:
    \begin{equation}
        \begin{aligned}
            & \snorm{|\Phi,M_{a|0}\rangle\langle \Phi,M_{a|0}| - \rho_{junk} \otimes (\Bar{M}_{a|0}\otimes I_B)|\Bar{\psi}_{target}^{max}\rangle \langle \Bar{\psi}_{target}^{max}|(\Bar{M}_{a|0}\otimes I_B)} \\
            &= \snorm*{\left( X_A^{(a)} P_A^{(a)} |\psi\rangle_{AB}|a\rangle_{A'}\right)\left( \langle \psi|_{AB} P_A^{(a)} (X_{A}^{(a)})^{\dagger} \langle a|_{A'} \right) - \rho_{junk} \otimes \frac{1}{d} |aa\rangle\langle aa|_{A'B} } \\
            &\leq 4a\sqrt{\epsilon} + 2\sqrt{\epsilon} + (4a\sqrt{\epsilon} + 2\sqrt{\epsilon})\norm{I_A \otimes \Bar{M}_{j|0}^B (\Bar{X}_B^{(j)})^{\dagger}|\psi\rangle_{AB}|a\rangle_{A'}} \, + \\
            &\quad \snorm*{\left( I_A \otimes \Bar{M}_{a|0}^B (\Bar{X}_B^{(a)})^{\dagger} |\psi\rangle_{AB}|a\rangle_{A'}\right)\left( \langle\psi|_{AB} I_A\otimes\Bar{X}_{B}^{(a)}\Bar{M}_{a|0}^B \langle a|_{A'} \right) - \rho_{junk} \otimes \frac{1}{d} |aa\rangle\langle aa|_{A'B} }\\  
            &\leq 2(4a\sqrt{\epsilon} + 2\sqrt{\epsilon}) + \snorm*{\langle 0_B|\psi_{AB}\rangle\langle\psi_{AB}|0_B\rangle - \frac{\rho_{junk}}{d}} \\
            &\leq (8a+4)\sqrt{\epsilon} + \epsilon \quad \mbox{where $a \in \{0,\ldots, d-1\}$ }\\
            &\leq (8d-4)\sqrt{\epsilon} + \epsilon
        \end{aligned}
    \end{equation}
    Next consider the case for $x=1$ and $a=2m$, the case for $a=2m+1$ yields the same bound, then equation \eqref{c10} becomes:
    \begin{equation}
        \begin{aligned}
            & \snorm{|\Phi,M_{a|1}\rangle\langle \Phi,M_{a|1}| - \rho_{junk} \otimes (\Bar{M}_{a|1}\otimes I_B)|\Bar{\psi}_{target}^{max}\rangle \langle \Bar{\psi}_{target}^{max}|(\Bar{M}_{a|1}\otimes I_B)} \\
            &= \snorm*{\left( \sum_{j=0}^{d-1} X_A^{(j)} P_A^{(j)} M_{a|1}|\psi\rangle_{AB}|j\rangle_{A'}\right)\langle \Phi,M_{a|1}| - \rho_{junk} \otimes \frac{1}{d}|+_{2m}+_{2m}\rangle \langle+_{2m}+_{2m}|_{A'B} } \\
            & \mbox{where $|+_{2m}\rangle = \frac{|2m\rangle + |2m+1\rangle}{\sqrt{2}}$; repeatedly using \autoref{shifting_projector} and \autoref{shifting_unitary} we get} \\
            &\leq \sum_{j=0}^{d-1}(2\sqrt{\epsilon}+2\sqrt{\epsilon}+4j\sqrt{\epsilon})\, + \\
            &\quad \snorm*{\left( \sum_{j=0}^{d-1} \Bar{M}_{a|1}^B \Bar{M}_{j|0}^B (\Bar{X}_B^{(j)})^{\dagger} |\psi\rangle_{AB}|j\rangle_{A'}\right)\langle \Phi,M_{a|1}| - \rho_{junk} \otimes \frac{1}{d}|+_{2m}+_{2m}\rangle \langle+_{2m}+_{2m}|_{A'B} } \\
            &\leq 4(d^2+d)\sqrt{\epsilon}\, + \\ 
            &\quad \snorm*{\bigg(\langle0_B|\psi_{AB}\rangle|+_{2m}+_{2m}\rangle_{A'B}\bigg) \bigg(\langle\psi_{AB}|0_B\rangle\langle+_{2m}+_{2m}|_{A'B}\bigg) - \rho_{junk} \otimes \frac{1}{d}|+_{2m}+_{2m}\rangle \langle+_{2m}+_{2m}|_{A'B} } \\
            &= 4(d^2+d)\sqrt{\epsilon} + \snorm*{\langle 0_B|\psi_{AB}\rangle\langle\psi_{AB}|0_B\rangle - \frac{\rho_{junk}}{d}}\\
            &\leq 4(d^2+d)\sqrt{\epsilon} + \epsilon
        \end{aligned}
    \end{equation}
    Finally, the case for $x=2$ proceeds similarly giving the same bound.

\end{proof}
\bibliography{steering}

\end{document}